\newtheorem{theorem}{Theorem}
\newtheorem{defn}{Definition}
\newtheorem{corr}{Corollary}
\newtheorem{lemma}{Lemma}
\newtheorem{example}{Example}
\newtheorem{knownresult}{Known Result}
\DeclareMathOperator*{\argmax}{arg\,max}
\newcommand{\defeq}{\mathrel{:\mkern-0.25mu=}}
\newcommand{\eqdef}{\mathrel{=\mkern-0.25mu:}}
\newcommand{\expectation}{\ensuremath{\mathbb{E}}}
\g@addto@macro\normalsize{%
  \setlength\abovedisplayskip{4pt}
  \setlength\belowdisplayskip{4pt}
  \setlength\abovedisplayshortskip{4pt}
  \setlength\belowdisplayshortskip{4pt}
}
\title{Fair Division Without Disparate Impact}
\author{%
  Alexander Peysakhovich* \\
  Facebook AI Research
  \And
  Christian Kroer* \\
  Facebook Core Data Science
}
\begin{document}
% \nipsfinalcopy is no longer used

\maketitle

\begin{abstract}
We consider the problem of dividing items between individuals in a way that is fair both in the sense of distributional fairness and in the sense of not having disparate impact across protected classes. An important existing mechanism for distributionally fair division is competitive equilibrium from equal incomes (CEEI). Unfortunately, CEEI will not, in general, respect disparate impact constraints. We consider two types of disparate impact measures: requiring that allocations be similar across protected classes and requiring that average utility levels be similar across protected classes. We modify the standard CEEI algorithm in two ways: equitable equilibrium from equal incomes, which removes disparate impact in allocations, and competitive equilibrium from equitable incomes which removes disparate impact in attained utility levels. We show analytically that removing disparate impact in outcomes breaks several of CEEI's desirable properties such as envy, regret, Pareto optimality, and incentive compatibility. By contrast, we can remove disparate impact in attained utility levels without affecting these properties. Finally, we experimentally evaluate the tradeoffs between efficiency, equity, and disparate impact in a recommender-system based market.
\end{abstract}

%
% The code below should be generated by the tool at
% http://dl.acm.org/ccs.cfm
% Please copy and paste the code instead of the example below.
%

%\begin{CCSXML}
%\end{CCSXML}

%\ccsdesc[500]{Computer systems organization~Embedded systems}
%\ccsdesc[300]{Computer systems organization~Redundancy}
%\ccsdesc{Computer systems organization~Robotics}
%\ccsdesc[100]{Networks~Network reliability}

%
% End generated code
%

%\keywords{markets, communism}

\maketitle

% The default list of authors is too long for headers.
%\renewcommand{\shortauthors}{KP et al.}
\section{Introduction}
Allocating a finite supply of items across individuals with heterogeneous preferences is an extremely important practical problem \citep{roth2002economist,roth2015gets}. In all applications mechanism designers need to trade off between various quantities that cannot be satisfied simultaneously and the most typical debate is how to trade off efficiency (total realized utility), distributional concerns, and various incentive properties \citep{nisan2001algorithmic,chen2013truth,caragiannis2016unreasonable}. A more recent literature in algorithmic fairness has taken up a different notion of fairness - ensuring `equal' treatment across groups of individuals \citep{barocas2016big,feldman2015certifying}. In this paper we explore how to incorporate these group-based notions into a popular division mechanism: competitive equilibrium from equal incomes (CEEI, \cite{varian1974equity,budish2011combinatorial}).

CEEI considers a set of items with finite supply and a set of individuals that have preferences over the items. In CEEI individuals are allocated a budget of pseudo-currency and individual valuations are used to construct demand functions i.e. what items individuals would `buy' using their budget given a price for each item. The mechanism computes a market equilibrium - a set of prices for each item such that the sum of demands meets supply. The allocations from this equilibrium is how the items are split.

Since CEEI is based on market equilibrium, the division inherits its properties: it is envy-free (everyone prefers their own allocation to anyone else's), has no regret (individuals would not change the allocations, given the budget and prices), it is Pareto efficient (nobody can be made worse off without making someone else better off), and it is strategy proof (individuals report their valuations truthfully) when instances are large \citep{varian1974equity,azevedo2018strategy}. 

An equally important question to what to maximize is how to actually compute CEEI \citep{nisan2001algorithmic}. When items are divisible we can use convex programming to choose a particular equilibrium solution where the allocation maximizes the product of utilities aka. the Nash social welfare \citep{eisenberg1959consensus,cole2017convex}. In practice NSW maximization produces highly efficient and yet distributionally fair outcomes and has been hailed as an `unreasonably fair'  way of allocating items, even in the indivisible case~\citep{caragiannis2016unreasonable}. 

For these reasons, CEEI and its variants are used in real world allocation problems such as allocating courses to students in business schools \citep{budish2011combinatorial} and the splitting of tasks such as chores between roommates (Spliddit.com, \cite{goldman2015spliddit}). 

By contrast to the fair division literature, the algorithmic fairness literature focuses on `fairness' of treatment \textit{across groups} (see the Related Work section). We focus on two particular notion of fairness. The first, motivated by real world constraints: a mechanism should not have disparate impact \textit{in outcomes} across protected classes \citep{barocas2016big}. This is particularly relevant when the `items' in question are those protected by law e.g. job interviews.\footnote{Note that we do not claim this is always that this notion of algorithmic fairness is the relevant one - this will depend on the particular application and circumstances.} We also focus on a different sense of disparate impact. We allow allocations to differ across groups but we require that \textit{realized utility} is equalized across groups (on average). 

CEEI may generate allocations which do not respect either definition of disparate impact. Our contributions are to present two modifications to the CEEI procedure. The first, which we refer to as Equitable Equilibrium from Equal Incomes (EqEEI) generates allocations without disparate impact in outcomes. To implement EqEEI we use a technique based on kernel two-sample test using the maximum mean discrepancy distance \citep{fortet1953convergence,gretton2012kernel} to `pre-process' a set of valuations. We then use these processed valuations in a modified CEEI. We show this yields allocations whose distributions do not differ across protected classes.

The second generates allocations without disparate impact in realized utilities. We show this can be done by changing the budget of the disadvantaged group in a CEEI-like procedure. We refer to as Competitive Equilibrium from Equitable Incomes (CEEqI).

We study EqEEI and CEEqI analytically. We show that removing disparate impact in allocations carries costs in terms of good properties of the mechanism - EqEEI has envy, regret, is not Pareto efficient, and loses incentive compatibility. We show how to bound these losses by invoking recent results on market abstractions \cite{kroer2019computing}. By contrast, we show that CEEqI remains a no regret, (budget adjusted) envy free, Pareto efficient, and incentive compatible mechanism. Thus, we argue, it may be a useful mechanism for trading off distributional and group-level fairness constraints in practice. Finally, we evaluate the tradeoffs from both EqEEI and CEEqI in a real dataset.

\section{Related Work}
CEEI is far from the only mechanism for fair division and finite supply linear utilities are far from the only valuation function. Another commonly studied problem is the allocation of a divisible single good (often called a cake) to agents with preferences over subsets of the cake \citep{brams1996fair,chen2013truth}. This is more complicated than our setting and future work should investigate the use of various algorithmic fairness constraints to the problem of fair division in this setting.

Other work in fair division has looked at group level metrics for example by expanding notions such as envy-free and maximin shares to group equivalents \citep{aleksandrov2018group,conitzer2019group}. The closest such extension to our work is the concept of group-envy-freeness either with respect to existing groups \citep{aleksandrov2018group} or a much stronger version which extends to \textit{all groups at once} \citep{conitzer2019group}. 

Our choice of disparate impact as our desideratum is a choice motivated by practical considerations and is not be applicable in all environments. A large literature on algorithmic fairness discusses many ways that fairness can be defined and which kind of fairness is appropriate for which kind of situation \citep{dwork2012fairness,barocas2016big,berk2018fairness,kusner2017counterfactual}. A set of papers also shows that different notions are incompatible and so real world systems must trade off between these different definitions \citep{kleinberg2016inherent,corbett2017algorithmic,friedler2016possibility}. An interesting extension to our preliminary results would be to incorporate other forms of algorithmic fairness into allocation mechanisms so that practitioners have multiple, off-the-shelf approaches that can be used in the appropriate situations. 

Our approach of constructing representations that are invariant to the protected class has been used in applying algorithmic fairness notions to the classification problem, e.g. a classifier for which individual should get a loan \citep{calders2009building,feldman2015certifying,zemel2013learning,louizos2015variational}. Our approach is also strongly related to recent work which shows that collaborative filtering \citep{yao2017beyond} or unsupervised learning such as word embeddings can capture stereotypes and biases that are present in the datasets and presents techniques for removing such biases \citep{bolukbasi2016man}. Our main contribution is to take these ideas and apply them to the problem of allocation, we do not claim to originate debiasing methods in general.

\section{Competitive Equilibrium from Equal Incomes}
We consider a set of $n$ individuals $\mathcal{U}$ with generic individual $i$. There are also a set of $m$ divisible items $\mathcal{J}$ with generic item $j$. Items have finite supply which here we normalize to $1$.

\begin{defn}
An \textbf{allocation} is an $x \in \mathbb{R}^{n \times m}_{+}$ with $x_{ij}$ indicating the amount of item $j$ that has been allocated to individual $i$. Let $X$ be the set of possible allocations given the supply constraints with $X_i$ being allocations that only allocate to individual $i$.
\end{defn}

Each individual $i$ has a valuation for each item $j$ written as $v_{ij}$. We assume valuations are linear so the total utility to person $i$ for an allocation $x$ is given by the standard dot product $v_i \cdot x_i = \sum_{j=1}^m v_{ij}x_{ij}$. Valuations are strictly positive and there exists a maximum valuation $\bar{v}.$

We focus on a particular allocation mechanism: competitive equilibrium from equal incomes (CEEI). The mechanism works as follows: each individual is given a unit budget (though generically this can be a budget of size $B_i$) of pseudo-currency. Each item $j$ is given a price $p_j$. Given these prices and a budget, the set of affordable allocations for individual $i$ is $\{x_i \in X_i : p \cdot x_i \leq B_i \}.$

\begin{defn}
  An individual $i$'s \textbf{demand} given prices $p$ and budget $B_i$ is
  \[
    d_i(p, B_i) = \lbrace x_i \in X_i \mid v_i \cdot x_i \geq v_i \cdot x'_i\ \forall x'_i \in X_i \text{ s.t. } x_i'\cdot p \leq B_i \rbrace.
  \]
  Note that the demand can be set valued but the achieved utility level of the demand, which we write as $\bar{d}_i (p, B_i)$ is unique.
\end{defn}

Given individual demand functions we can define an equilibrium.

\begin{defn}
A \textbf{market equilibrium} is a feasible allocation $x^* \in X$ and a set of prices $p^*$ such that for all $i$ $x^*_i \in d_i (p, B_i)$ and for all items $j$ $\sum_i x_{ij} = 1.$
\end{defn}

A particular market may have multiple equilibria. However, there is one market equilibrium that is always guaranteed to exist: the Eisenberg-Gale equilibrium \citep{eisenberg1959consensus}, where each individual gets the same bang-per-buck across across all items assigned to the individual. We use this specific equilibrium, and refer to the assignment mechanism which uses this equilibrium as CEEI\footnote{We refer to \text{the} CEEI across this text, though this is not technically correct: CEEI prices are unique but there may be multiple allocations $x$ corresponding to the unique prices $p$ when individuals have ties across items. However, the utility levels attained across these allocations are equal.}. It is well known that this equilibrium (and many related variants) can be found by solving a convex program \citep{eisenberg1959consensus,cole2017convex}:

\begin{align*}
\max_{x} \sum_i B_i \text{log} (v_i \cdot x_i) \text{ subject to } \forall j \sum_{i} x_{ij} \leq 1
\end{align*}

%\begin{equation*}
%\begin{aligned}
%& \underset{x}{\text{maximize}}
%& & \sum_i B_i \text{log} (v_i \cdot x_i) \\
%& \text{subject to}
%& & \sum_{i} x_{ij} \leq 1
%\end{aligned}
%\end{equation*}

The convex program interpretation means the outcomes of our mechanism will be distributionally fair in the sense of maximizing the product of utilities. In addition, because our allocation is a market equilibrium we are guaranteed envy freeness, no regret (relative to the prices), and that each individual gets at least their maximin share. In addition, the CEEI prices are unique though allocations may have ties and so may not be.

A nice structural feature of the solution to the CEEI optimization problem is that it is the maximal point of the set of \emph{utility prices} for individuals. The utility price of an individual $i$ is the amount of budget they need to spend to acquire $1$ point of utility (the inverse of this is often referred as the bang-per-buck rate of individual $i$). It is easily seen that in the optimal solution each individual has a single utility price $\beta_i = \frac{B_i}{v_i\cdot x_i}=\frac{p_j}{v_{ij}}$ for all $j$ such that $x_{ij}>0$. Thus we may think of the algorithmic problem of finding the CEEI solution as one of finding the optimal utility prices for each individual.\footnote{This observation regarding the maximality properties of utility prices was previously proven in the quasi-linear case (where utility is given by $x_i \dot v_i - p \cdot x_i$) by \citet{conitzer2019pacing} and in the Appendix we extend this result to the our case.}

\subsection{Large Market Results}
An important part of CEEI is individuals reporting their valuations to the mechanism. A mechanism that ensures individuals report their valuations truthfully is known as \textit{strategy proof}. While it is known that individuals can often misreport their valuations and get better outcomes in CEEI \citep{branzei2014fisher} it is also known that many non-strategy proof mechanisms  can become \emph{strategy proof in the large} (SP-L).

We will be mostly interested in large scale allocation problems (e.g. courses to students), thus here we turn to a stochastic model to study whether CEEI (and our later proposed mechanisms) are indeed SP-L.\footnote{Our results are stronger than existing results \citep{jackson1997approximately,azevedo2018strategy} as we give a convergence rate for large finite markets rather than considering limit definitions for continuous markets. In addition, our results extend the general results in \citet{azevedo2018strategy} as we allow for a continuous outcome space.}

We will use the following stochastic model for growing our market. There are $n$ randomly drawn individuals and a constant set of $m$ items. The supply of each item grows linearly in market size and is given by $s_j=c_j n$. There is a distribution of valuations $\mathcal{F}$ which has full support on the set of possible valuations $[0, \bar{v}]^m.$

We consider the definition of SP-L introduced by \citet{azevedo2018strategy}. Let $\sigma$ be a mapping (not necessarily truthful) from valuations to reports. Suppose that $\sigma$ has full range (i.e. for any report, there is a type that gives that report with some probability). Consider an individual $i$ who can either report their true valuations $v_i$ or some other valuation $v'_i$ with everyone else reporting according to $\sigma$. Consider $i$'s expected utility with the expectation taken relative to the $n-1$ other individuals behaving according to $\sigma$ with their true types drawn from $\mathcal{F}$.

\begin{defn}
 A mechanism is SP-L if, given $\sigma$ and any $\epsilon > 0$ there exists a $\bar{n}$ such that if $n > \bar{n}$ then the gain from reporting any $v'_i$ instead of their true valuation $v_i$ gains is at most $\epsilon$.
\end{defn}

As \citet{azevedo2018strategy} point out, this notion is stronger than dominant strategy incentive compatibility but weaker than simply requiring that truth-telling be a Bayesian Nash equilibrium. We first show a result about CEEI which we can then use in later proofs. 

\begin{theorem}
  For any $\sigma$ the expected utility gain from misreporting for an individual $i$ in the CEEI mechanism is bounded by $O(\frac{1}{n})$. Thus CEEI is SP-L.
  \label{thm:ceei spl}
\end{theorem}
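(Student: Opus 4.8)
The plan is to use the classical ``price-taking implies truthfulness'' logic and then to quantify the \emph{only} channel through which a single individual can manipulate CEEI: the equilibrium price vector. By the structural characterization of the CEEI solution noted above, individual $i$'s allocation is determined entirely by the equilibrium prices $p$: given $p$ and a unit budget, the CEEI allocation to $i$ maximizes the \emph{reported} value $v_i'\cdot x_i$ over the budget set $\{x_i\in X_i : p\cdot x_i\le 1\}$, spending the whole budget (valuations are positive) and attaining indirect utility $u(p,v_i')\defeq\max_j v_{ij}'/p_j$. If $p$ were held fixed, truthful reporting would be a dominant strategy, because the truthful demand maximizes the \emph{true} value $v_i\cdot x_i$ over the same budget set and so weakly dominates, in true utility, any bundle induced by a misreport (which is merely some feasible point of that budget set). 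Hence every incentive to misreport must come from the effect of $i$'s report on the equilibrium prices.

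Second, let $p$ and $p'$ be the equilibrium prices when $i$ reports $v_i$ and $v_i'$ respectively, holding the other $n-1$ reports fixed. The bundle $i$ obtains under the misreport lies in $\{x_i : p'\cdot x_i\le 1\}$, so its true utility is at most $u(p',v_i)=\max_j v_{ij}/p_j'$, whereas truthful reporting yields exactly $u(p,v_i)$; thus the gain from misreporting is at most $u(p',v_i)-u(p,v_i)$. I would then argue that $p\mapsto\max_j v_{ij}/p_j$ is $L$-Lipschitz on the relevant region. Since $\mathcal{F}$ has full support on $[0,\bar v]^m$ and supply is finite, every equilibrium price is bounded below by some $\underline p>0$ (an item priced near $0$ would attract demand far exceeding its finite supply $c_j n$), and on $\{p:p_j\ge\underline p\}$ each ratio $v_{ij}/p_j$, and therefore their maximum, is Lipschitz with constant $O(\bar v/\underline p^2)$. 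Consequently the gain is at most $L\,\|p'-p\|$.

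The crux is to bound $\|p'-p\|=O(1/n)$, and then to take expectations. The two markets differ in exactly one of $n$ agents while supply scales as $s_j=c_j n$; dividing the clearing condition $\sum_i x_{ij}=c_j n$ by $n$, the equilibrium price is the one that clears \emph{per-capita} demand. A single agent's demand is bounded independently of what it reports (a unit budget with $p_j\ge\underline p$ caps each quantity at $1/\underline p$), so swapping agent $i$'s report perturbs per-capita demand by only $O(1/n)$; it then remains to show that the equilibrium price is a Lipschitz (or H\"older) function of this per-capita target, so that an $O(1/n)$ perturbation moves $p$ by $O(1/n)$. I would obtain this stability by treating prices as the dual optimum of the Eisenberg--Gale program and invoking a sensitivity/monotonicity argument, using that with $n-1$ other full-support agents present the aggregate demand is well conditioned with high probability. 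Taking expectations over the other agents' types drawn from $\mathcal{F}$ and their reports under $\sigma$ yields expected gain $\le L\,\mathbb{E}\|p'-p\|=O(1/n)$, and choosing $\bar n=O(1/\epsilon)$ gives SP-L. I expect the main obstacle to be exactly this quantitative price stability: the demand correspondence is set-valued and behaves discontinuously at ties, so one must lean on the full-support assumption both to secure the uniform lower bound $\underline p$ underpinning the Lipschitz constant and to assign vanishing probability to the degenerate configurations in which a lone agent could move prices by more than $O(1/n)$.
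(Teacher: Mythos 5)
Your overall architecture matches the paper's: lower-bound equilibrium prices away from zero using the full-support assumption plus a Hoeffding/union bound over the bad event, observe that a price-taker gains nothing from misreporting so all gains flow through the price perturbation, and convert an $O(1/n)$ price perturbation into an $O(1/n)$ utility gain via the bang-per-buck expression $\max_j v_{ij}/p_j$ on the region where prices are bounded below. The expectation step (exponentially small probability of the degenerate configuration times a crude utility bound) is also the paper's.

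The genuine gap is the step you yourself flag as the crux: showing that swapping one agent's report moves equilibrium prices by $O(1/n)$. Your proposed route --- divide market clearing by $n$, note the per-capita demand perturbation is $O(1/n)$, then invoke Lipschitz dependence of the equilibrium price on the per-capita demand target via ``a sensitivity/monotonicity argument'' on the Eisenberg--Gale dual --- is not an argument but a placeholder, and it is exactly where the difficulty lives: with linear utilities the demand correspondence is set-valued and discontinuous at bang-per-buck ties, so there is no off-the-shelf Lipschitz stability of prices in demand, and ``aggregate demand is well conditioned with high probability'' is asserted, not established. The paper closes this step by a different and much more elementary mechanism, via the maximality structure of utility prices (Lemma~\ref{lem:ceei maximal utility prices}): the equilibrium of the market \emph{without} agent $i$ is budget-feasible for the market \emph{with} $i$, so by maximality all prices weakly increase when $i$ is added (for any report); and since equilibrium prices satisfy the budget-conservation identity $\sum_j s_j p_j = \sum_{i'} B_{i'}$, the total supply-weighted increase is exactly $B_i$, whence $p_j' \in \left[p_j, p_j + \frac{B_i}{s_j}\right]$ for every item and every report, i.e.\ a deterministic $O(1/n)$ bound on each coordinate (Lemma~\ref{lem:small price effect}). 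Without this (or an equivalent) argument your proof does not go through; with it, the rest of your outline assembles into essentially the paper's proof.
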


Our proof consists of a series of lemmas about the effect on the market that any individual agent can have. The lemmas may be of independent theoretical interest but in the interest of space we relegate them to the appendix. First we show that the amount by which an individual can affect prices decreases with market size. Second, we show that for the proposed stochastic model market prices become bounded from below. This allows us to give the convergence rate in the Theorem.
%\begin{lemma}
%  For any market, an individual can affect the price $p_j$ of any item $j$ by at most $\frac{B_i}{s_j}$.
%  \label{lem:small price effect}
%\end{lemma}
%If we were not interested in obtaining a rate of convergence to SP-L, then Lemma~\ref{lem:small price effect} could be combined with the results of \citet{roberts1976incentives,jackson1997approximately} to show SP-L. A similar rate-less result follows from \citet{roberts1976incentives}.
%
%The second lemma shows that as discrete chunks of the market fill out (which is guaranteed to happen in the large), prices become bounded from below. 
%\begin{lemma}
%  For all $j \in \mathcal{J}$ and any set of individuals $\mathcal{I}'$ such that some $i$ in $\mathcal{I}'$ has $v_{ij} > 0$, we have
%  $
%    p_j \geq \frac{|\mathcal{I}'|\min_iB_i}{n\sum_{j' \in \mathcal{J}}c_{j'} v_j''}v_j' \eqdef p_j^{\downarrow},
%  $
%  where $v_j' \leq \min_{i\in \mathcal{I}'}v_{ij}$ and $v_{j'}'' \geq \max_{i\in \mathcal{I}'} v_{ij'}$ for all $j' \in \mathcal{J}$.
%  \label{lem:price grows}
%\end{lemma}
%
%The convergence rate to SP-L for CEEI follows from the above lemmas, along with an argument that any fixed-size market set of possible valuations encompassing that of individual $i$ converges to its expected size at an exponential rate in the size of the market. The proof is in the appendix.

\section{Disparate Impact}
While CEEI is distributionally fair, it may be unfair in a different way. Suppose each individual is associated with a binary \textit{protected class} label $z_i$. The literature on algorithmic fairness focuses on disparate impact across these protected classes. Disparate impact may occur in one of two ways. First, mechanism designers may care about overall allocation of items not differing across groups. This is particularly relevant when mechanisms are used in legally protected domains such as the allocation of items related to credit, housing, or jobs. Second, designers may care about realized \textit{utility levels} being equalized across groups. The CEEI allocations may not respect either definition of fairness.

%We now describe both of these desiderata formally and give two mechanisms based on modifications to CEEI that construct allocations to achieve these criteria.
%We note that there are many definitions of non-discriminatory fairness in the literature \citep{dwork2012fairness,berk2018fairness,kusner2017counterfactual}. We begin our focus with these two and leave the study of other possible desiderata for future work.

\subsection{Equitable Equilibrium from Equal Incomes}
Given an allocation $x$ let $f_{x}^z$ be the probability mass function for the empirical distribution of allocations for individuals with protected class $z \in \{0,1\}$. Our first notion of algorithmic fairness requires that the  allocation distribution conditioned on the protected class is the same across classes.

\begin{defn}
We say that an allocation \textbf{has no disparate impact in allocations} if $f_{x}^1 = f_{x}^0.$
\end{defn}

It is fairly clear that CEEI may not respect the protected class. Even though CEEI does not use $z$ directly, $v$ may be related to $z$ and thus lead to allocations that do not respect the protected class.

Unfortunately, the simple solution of adding $f_x^1=f_x^0$ as a constraint in the convex program above changes the problem away from a convex program and the solution becomes a MIP. However, it is well known that (in the case of random variables) if two variables are independent, then functions of them are also independent.

This suggests the following procedure which we refer to as \textit{equitable equilibrium from equal incomes} (EqEEI). Everyone reports their valuations, we construct a matrix $V$. We transform the matrix into $\hat{V}$ such that $f_{\hat{V}}^0 = f_{\bar{V}}^1$, where $f_{\hat V}^z$ is the empirical distribution of $\hat V$ under protected class $z$. Because of this procedure's similarity to past work we refer to this as the set of `debiased' valuations.

We then compute CEEI using $\hat{V}.$ Note that for $f_{\hat{V}}^0 = f_{\bar{V}}^1$ to hold, it must be that for any individual's valuation vector $\hat{v}$, it is equally likely to be found in class $0$ or $1$ - when the number of individuals is finite, this means that if there is an individual in class $0$ with a vector $\hat{v}$ there must also be one in class $1$ with the same $\hat{v}.$

We can pool all items assigned to individuals with vector $\hat{v}$ and split each item uniformly among them.\footnote{This is equivalent to creating a representative buyer in the abstraction framework of \citet{kroer2019computing}. Creating a representative buyer may be preferable for computational purposes.} Because these individuals all have vector $\hat{v}$ they are indifferent among the pooled set of items (under $\hat V$), and any reassignment is also guaranteed to be a CEEI under $\hat V$.

Because allocations are now deterministic functions of $\hat{v}$ vectors and because $\hat{v}$ does not distinguish classes we can formally state that:

\begin{theorem}
  An EqEEI $(p, x)$ has no disparate impact in allocations.
  \label{thm:eqeei no disparate impact}
\end{theorem}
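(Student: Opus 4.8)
The plan is to prove $f_x^0 = f_x^1$ by showing that, under the pooling construction, the bundle each individual receives depends only on their debiased valuation vector $\hat{v}$ and not on their class label $z$, and then invoking the fact that the debiasing step has equalized the distribution of $\hat{v}$ across the two classes. Concretely, I would exhibit a single deterministic map $g$ from debiased vectors to allocations and argue that $f_x^z$ is exactly the pushforward of $f_{\hat V}^z$ through $g$; since $f_{\hat V}^0 = f_{\hat V}^1$ by construction of $\hat V$, the two pushforwards must coincide.

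First I would make the allocation map well defined. Fix the (unique) CEEI prices $p$ computed from $\hat V$. For each distinct debiased vector $\hat v$ appearing in the population, collect every individual holding that vector, pool the items that CEEI assigns to this group, and split each pooled item uniformly among the group's members, calling the common resulting bundle $g(\hat v)$. I would verify that this reassignment is still a CEEI under $\hat V$: all members of the group have identical valuations $\hat v$ and identical budgets, so they are mutually indifferent among any budget-feasible division of the pooled bundle; hence each still holds a demand-optimal, budget-feasible bundle at prices $p$, and market clearing is untouched because only items internal to the group were permuted. This is precisely the tie-breaking freedom noted earlier (CEEI prices are unique, allocations need not be), so $g$ is a legitimate selection.

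Next I would connect $g$ to the conditional allocation distributions. By construction every individual with vector $\hat v$ receives $g(\hat v)$ regardless of class, so the empirical allocation distribution within class $z$ is the image of the empirical debiased-vector distribution within class $z$ under the \emph{same} map $g$; symbolically $f_x^z = g_\# f_{\hat V}^z$. The debiasing step guarantees $f_{\hat V}^0 = f_{\hat V}^1$, and in the finite population this means the multiset of $\hat v$ vectors in class $0$ equals that in class $1$. Applying the identical map $g$ to identical multisets yields identical allocation multisets, i.e. $f_x^0 = g_\# f_{\hat V}^0 = g_\# f_{\hat V}^1 = f_x^1$, which is exactly the assertion that the EqEEI allocation has no disparate impact.

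The main obstacle is the middle step rather than the final pushforward, which is essentially bookkeeping. The delicate point is establishing that one can always \emph{select} a CEEI in which the allocation is a function of $\hat v$ alone: a priori, two individuals sharing a debiased vector could receive different bundles in some CEEI, and then the allocation would carry information about $z$ even though $\hat v$ does not. The pooling-and-uniform-splitting argument closes this gap, but it leans on two facts I would state carefully — that members of a group are genuinely indifferent across reassignments of the pooled bundle (so optimality and hence the equilibrium property are preserved), and that $f_{\hat V}^0 = f_{\hat V}^1$ forces each $\hat v$ to occur with equal multiplicity in the two classes (so $g$ is applied to matching populations). Once these are in hand, the conclusion is immediate.
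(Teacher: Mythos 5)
Your proposal is correct and follows essentially the same route as the paper: both arguments observe that the pool-and-split-uniformly step makes the allocation a deterministic function of the debiased vector $\hat v$ alone, and then transfer the equality $f_{\hat V}^0 = f_{\hat V}^1$ to the allocation distributions (the paper writes this as a sum of $\hat f_{\hat V}^z$ over the preimage of each bundle, which is your pushforward identity in different notation). Your extra care in verifying that the pooled reassignment remains a CEEI under $\hat V$ is a point the paper handles in the main text just before the theorem rather than inside the proof, but it is the same observation.
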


Of course, one choice of $\hat{V}$ is setting $\hat v_{ij}$ to a constant and giving everyone an equal fractional share. However, this will give a very poor allocation. Instead we would like to choose $\hat V$ such that we somehow maintain `good' allocation properties. To this end, our choice of $\hat{V}$ can be guided by known results:

\begin{knownresult}[\cite{kroer2019computing}]
  If we compute the CEEI allocation $x$ for valuation matrix $\hat{V}$ and use $x$ with the real valuation matrix $V$ then quantities such as envy, regret, and maximin share, can be bounded by a linear function of $ \| V - \hat{V} \|_{1,\infty}.$
  %Furthermore, $x$ approximately maximizes weighted social welfare under Negishi weights $\beta \in \mathbb{R}_+^n$  up to additive error $\|\beta\|_1 \| V - \hat{V} \|_{1,\infty}$.
  Finally the Nash social welfare can be bounded by a multiplicative function of $\| V - \hat{V}\|_{1,\infty}.$
\label{known:kroer19}
\end{knownresult}

The $1, \infty$ norm is difficult to minimize, so \citet{kroer2019computing} suggest using the Frobenius norm instead. Thus, to apply EqEEI in practice, we propose to solve 
$
\min_{\hat{V}} || V - \hat{V} ||_{F} \text{ subject to } f_{\hat{V}}^0 = f_{\hat{V}}^1
$.

%\begin{equation*}
%\begin{aligned}
%& \underset{\hat{V}}{\text{minimize}}
%& & \| V - \hat{V} \|_F \\
%& \text{subject to}
%& & f_{\hat{V}}^0 = f_{\hat{V}}^1
%\end{aligned}
%\end{equation*}

Then given this solution, we will compute the CEEI allocation using $\hat{V}$. EqEEI will then be a CEEI with respect to $\hat{V}$.

%It is also an approximate CEEI with respect to $V$, in the sense that supply equals demand, and by Known Result~\ref{known:kroer19}  each agent $i$ is \emph{approximately} receiving their demand, up to additive error $2\|v_i - \hat{v}_i\|_1$ in their utility; the results of \citet{kroer2019computing} also imply bounded improvement to the other metrics in Known Result~\ref{known:kroer19}.

Next we show a counterexample showing that EqEEI is unfortunately not strategyproof, even in the large. In other words, when EqEEI is deployed, individuals can gain non-negligible utility from misreporting their valuation vector. The reason is that when the valuation distributions $f_V^0,f_V^1$ conditioned on $z$ are very different from each other, then some individuals must be mapped to a debiased valuation that is quite far from their original, causing them to buy suboptimal bundles. 

\begin{example}
We will consider a counterexample consisting of a market with a continuum of individuals. Let there be two items: item $A$ has valuation 1 for every individual. Item $B$ is where heterogeneity occurs and individuals have a valuation in $v \in [0,2]$ for this item.
 
The continuous set of individuals is parametrized by pdfs $f^1, f^0 : [0,2] \rightarrow \mathbb{R}_+$ for $z=0$ and $z=1$ respectively, with $f^i(z)$ being the density of individuals of class $i$ with valuation $z$ for item $A$. In the continuous case the supply constraint is that $\int_i x_{ij} = 1$ for $j=A,B$.

Let the median values of $f^0$ and $f^1$ be $\bar{B}^0$ and $\bar{B}^1$, with $\bar{B}^0 < 1 < \bar{B}^1$. In a standard CEEI more individuals from class $1$ will receive item $B$ and more individuals from class $0$ will receive item $A$. Thus, CEEI will not respect disparate impact with respect to allocations. In this case `debiasing' maps each individual in each class to a new valuation $\hat{v} (z, v)$ such that the resulting distribution from both classes is $\hat{f}$. Let us consider the case where the $\hat{f}$ continues to have full support, this is not required but makes the logic clearer.

CEEI with respect to this new $\hat{f}$ generates an allocation map $\hat{x} (\hat{v}).$ Notice that these allocations can only include a measure $0$ set of types that receive both items. This is because CEEI sets equal rates and if prices are such that some $\bar{\hat{v}}$ receives equal rates then any $\hat{v} > \bar{\hat{v}}$ must prefer a pure allocation of item $B$ at these prices while any $\hat{v} < \bar{\hat{v}}$ would prefer a pure allocation of item $A$. 

However, since $f^1$ and $f^0$ are mapped to the same $\hat{f}$, one of the following must be true: there is a $v$ in class $1$ with $\hat{v} (v) < \bar{\hat{v}}$ but $v > \bar{\hat{v}}$  or there is a $v$ in class $0$ with $\hat{v}(v) > \bar{\hat{v}}$ but $v < \bar{\hat{v}}.$ Thus, these individuals will be receiving the `wrong item' with respect to their true valuations. Thus, they would like to misreport and pretend to be a different type $v'$ in their class which is receiving the correct item. Such a $v'$ must exist since allocations are not distinguishable across classes and so individuals in both protected classes must be receiving some allocation of purely a single item.
\end{example}
% \begin{example}
%   Let there be two types of items: item $1$ has valuation 1 for every individual. Item $2$ has valuations drawn from $f^0, f^1$ for $z=0$ and $z=1$ respectively. Let the median values of $f^0$ and $f^1$ be $a$ and $b$, with $a < b$. Each item has supply $n$, where $n$ is the number of individuals in the market. For any $n$, there is some probability $c_n$ that at least half the sampled buyers from $z=0$ value item $2$ at less than $a$, and at least half the sampled buyers from $z=1$ value item $2$ at more than $b$. The limit of $c_n$ as $n$ goes to infinity is nonzero. Now, for any such sampled instance, some subset of individuals  with value greater than $b$ must have the same valuation $\tilde v$ as some individuals with value less than $a$, denote this group of individuals by $\mathcal{I}'$. No matter what choice we make for $\tilde v$, some individuals in $\mathcal{I}'$ will buy the wrong item, and they could improve their utility by claiming to have a different valuation that maps to a debiased valuation that buys the correct item.
% \end{example}

While EqEEI is not strategyproof in the large, it does give a very strong guarantee: allocations are indistinguishable based on $z$. Thus it may be a desirable solution concept when such guarantees are required, e.g. by law. Next we will investigate a solution concept that is allowed to distinguish on $z$, but where we get stronger guarantees in terms of properties satisfied by standard CEEI.

\subsection{Competitive Equilibrium from Equitable Incomes}
We now consider equitable allocations in utility space. Given an allocation $x$ let $U(x)_z$ be the average utility of individuals with protected class $z$ in allocation $x$.

\begin{defn}
We say that an allocation $x$ \textbf{has no disparate impact in utilities} if $U(x)_1 = U(x)_0.$
\end{defn}

It is clear that vanilla CEEI may not satisfy the condition above. For simplicity, we assume that group $0$ is the group which under $x^{CEEI}$ has $U(x)_1 < U(x)_0.$  

We propose the following mechanism: instead of everyone being allocated an equal budget of size $1$, everyone with protected class $z=0$ is allocated a budget of $1$ and everyone with $z=1$ is allocated a budget of $B_1 > 1$ such that average utilities equalize. Such a $B_1$ is guaranteed to exist:

\begin{theorem}
There exists $\bar{B}$ such that the EG equilibrium allocation $x^*_{\bar{B}}$ has $U(x^*)_1 = U(x^*)_0.$ 
\end{theorem}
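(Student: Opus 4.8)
The plan is to treat the class-$1$ budget $B$ as a one-dimensional parameter and apply the intermediate value theorem to the group utility gap. Define
\[
g(B) \defeq U(x^*_B)_1 - U(x^*_B)_0,
\]
where $x^*_B$ is the EG equilibrium allocation obtained by solving the Eisenberg--Gale program with budgets $B_i = 1$ for $z_i = 0$ and $B_i = B$ for $z_i = 1$, and $n_z$ denotes the number of individuals in class $z$. Although $x^*_B$ need not be unique, the attained utility vector $u^*(B) = (v_i \cdot x^*_{B,i})_i$ is: the program maximizes $\sum_i B_i \log u_i$, a strictly concave function, over the fixed compact convex set of feasible utility vectors, so the maximizer $u^*(B)$ is unique and $g(B)$ is well defined. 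By the assumption that class $1$ is disadvantaged under standard CEEI, $g(1) = U(x)_1 - U(x)_0 < 0$.

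First I would establish that $g$ is continuous on $[1, \infty)$. Since the feasible utility set does not depend on $B$ and the objective $\sum_i B_i \log u_i$ is jointly continuous in $(u, B)$ with a unique maximizer for each $B$, Berge's maximum theorem gives that $B \mapsto u^*(B)$ is continuous. Consequently each group average $U(x^*_B)_z = \frac{1}{n_z}\sum_{i : z_i = z} u_i^*(B)$, and therefore $g$, is continuous.

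Next I would show that $g$ eventually becomes positive. As $B \to \infty$ the aggregate budget $n_0 + n_1 B$ is dominated by class $1$, so class $0$'s share of total purchasing power tends to $0$. At any equilibrium every agent spends their entire budget and prices clear the market, so the equilibrium value of goods bought by class $0$ is a vanishing fraction of the value of all goods; since valuations are bounded above by $\bar v$, this forces $U(x^*_B)_0 \to 0$. Meanwhile class $1$ acquires essentially the whole supply, so with strictly positive valuations $U(x^*_B)_1$ stays bounded below by a positive constant. Hence $\liminf_{B \to \infty} g(B) > 0$, so there is a $B'$ with $g(B') > 0$. Applying the intermediate value theorem to the continuous $g$ on $[1, B']$ produces a $\bar B$ with $g(\bar B) = 0$, which is precisely the desired equalization $U(x^*_{\bar B})_1 = U(x^*_{\bar B})_0$.

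The step I expect to be the main obstacle is the limiting argument as $B \to \infty$: one must rule out that equilibrium prices adjust so as to keep class $0$'s utility bounded away from zero. The budget-share bound is the crux here --- because each agent spends exactly their budget, class $0$'s vanishing fraction of the total budget must translate into vanishing consumed value and hence vanishing utility, and making this quantitative (e.g.\ bounding $U(x^*_B)_0$ by $\bar v$ times class $0$'s budget share) is what pins down the sign of $g$ for large $B$. Once that is in hand, the uniqueness of the equilibrium utility vector, the resulting continuity, and the IVT complete the argument.
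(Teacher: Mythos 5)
Your proposal is correct and follows essentially the same route as the paper: the paper's proof is exactly an intermediate-value-theorem argument on the group-utility gap as a function of the class-$1$ budget, asserting a negative sign at one end (the paper anchors at $B_1=0$ where you anchor at $B=1$ via the standing disadvantage assumption), a positive sign for large $B_1$, and continuity of the EG solution in $B$. You supply more detail than the paper does on the two asserted ingredients (working with the unique equilibrium utility vector rather than the possibly non-unique allocation, and the budget-share argument for the large-$B$ limit), but the overall structure is identical.
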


The proof of the theorem follows from the following. First, for $B_1 = 0$, we have $U(x_{0}^*)_1 < U(x^*_0)_0.$ Second, for large enough $B_1$, we have $U(x_{0}^*)_1 > U(x^*_0)_0.$ Third, the EG allocations $x^*$ are continuous in $B$. Therefore, by the intermediate value theorem, there exists $\bar{B}$ such that $U(x^*)_1 = U(x^*)_0.$ 

We refer to the mechanism which uses the EG allocation from this $\bar{B}$ as \textit{competitive equilibrium from equitable incomes}. Importantly, this allocation can be computed efficiently.

\begin{corr}
For any $\epsilon > 0$ we can compute $\bar{B}_{\epsilon}$ such that the EG equilibrium computed using $\bar{B}_{\epsilon}$ has $| U(x_{0}^*)_1 - U(x^*_0)_0 | < \epsilon$ using $\text{log} (\frac{1}{\epsilon})$ number of CEEI solves.
\end{corr}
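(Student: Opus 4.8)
The plan is to perform binary search (bisection) on the budget parameter $B_1$. Define the utility-gap function
\[
g(B_1) = U(x^*_{B_1})_1 - U(x^*_{B_1})_0,
\]
where $x^*_{B_1}$ denotes the EG allocation obtained when every class-$0$ individual has budget $1$ and every class-$1$ individual has budget $B_1$. By the preceding theorem we already know $g(0) < 0$, that there is a finite $B_{\max}$ with $g(B_{\max}) > 0$, and that $g$ is continuous on $[0,B_{\max}]$; hence $[0,B_{\max}]$ brackets a root $\bar{B}$ with $g(\bar{B}) = 0$.

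First I would record a monotonicity observation: raising $B_1$ weakly increases the equilibrium utility of each class-$1$ individual and weakly decreases that of each class-$0$ individual (more budget to one group bids up prices, so agents whose budgets are unchanged can afford less), so $g$ is non-decreasing. This is not strictly necessary for bisection to converge to a root, but it makes the bracketing unambiguous and the root essentially unique, so each midpoint evaluation tells us unambiguously which subinterval to retain.

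The key quantitative step, and the part where the main work lies, is a Lipschitz bound $|g(B_1) - g(B_1')| \le L\,|B_1 - B_1'|$ for a constant $L$ depending only on $\bar{v}$, $m$, and the class sizes. The theorem supplies only continuity of the EG allocation in the budget; to obtain a convergence rate I need a genuine modulus of continuity. I would derive this from a stability/sensitivity argument for the Eisenberg--Gale program $\max_x \sum_i B_i \log(v_i\cdot x_i)$ under perturbation of the weights $B_i$, using that valuations are bounded by $\bar{v}$ and supply is normalized, so every attainable utility lies in a bounded range. Boundedness together with monotonicity already shows $g$ sweeps a bounded interval; upgrading this to a \emph{uniform} Lipschitz constant is the delicate point, since one must control how the optimizer moves when the objective weights shift.

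Given the bracket $[0,B_{\max}]$ and the constant $L$, the remainder is routine. Running bisection for $k = \lceil \log_2(L\,B_{\max}/\epsilon) \rceil$ iterations halves the bracket each step, so after $k$ steps its width is at most $B_{\max}/2^k \le \epsilon/L$. Each iteration evaluates $g$ at the current midpoint, costing exactly one EG (CEEI) solve. Outputting any point $\bar{B}_\epsilon$ of the final bracket and combining the Lipschitz bound with $g(\bar{B}) = 0$ yields
\[
|g(\bar{B}_\epsilon)| = |g(\bar{B}_\epsilon) - g(\bar{B})| \le L\,|\bar{B}_\epsilon - \bar{B}| \le L\cdot\frac{\epsilon}{L} = \epsilon,
\]
that is $|U(x^*_{\bar{B}_\epsilon})_1 - U(x^*_{\bar{B}_\epsilon})_0| \le \epsilon$, using $k = O(\log(1/\epsilon))$ CEEI solves, as claimed.
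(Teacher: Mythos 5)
Your proposal takes essentially the same route as the paper: the paper's entire argument for this corollary is that the utility gap $U(x^*)_1 - U(x^*)_0$ is monotone in $B_1$, so binary search over the bracket supplied by the existence theorem locates $\bar B$ in $\log(\frac{1}{\epsilon})$ EG solves. If anything you are more careful than the paper, since you correctly flag that converting a width-$\epsilon/L$ bracket on $B_1$ into an $\epsilon$ bound on the utility gap requires a quantitative modulus of continuity (your Lipschitz constant $L$) and not mere continuity --- a step the paper omits entirely --- though you sketch rather than actually prove that sensitivity bound for the Eisenberg--Gale program.
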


Since the difference $U(x^*)_1 - U(x^*)_0$ is monotone we can even use a simply binary search procedure to find $\bar{B}.$ Once we have this $\bar{B}$ we can use the EG equilibrium allocation as the item assignments. We refer to this procedure as \textit{competitive equilibrium from equitable incomes} (CEEqI). In what follows we assume WLOG that $B_1 \geq 1$.

Unlike EqEEI, CEEqI will preserve many properties of CEEI quite well. Indeed, we can show that:
\begin{theorem}
  CEEqI has no regret (individuals prefer their CEEqI allocations to any allocation they could afford given their CEEqI budgets), no scaled envy (every individual in the group with larger budget has no envy, whereas individuals in the group with lower budgets has no envy after scaling their utility by $B_1$), CEEqI allocations are Pareto optimal, CEEqI is SP-L.
  \label{thm:ceeqi sp-l}
\end{theorem}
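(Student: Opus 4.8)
The plan is to observe that CEEqI is nothing but the Eisenberg--Gale market equilibrium $(p,x^*)$ run with the fixed but unequal budget profile $B_i = 1$ for $z_i=0$ and $B_i = \bar B$ for $z_i = 1$, where $\bar B$ is the equalizing budget guaranteed by the preceding theorem. Equivalently, $x^*$ is the maximizer of $\sum_i B_i \log(v_i \cdot x_i)$ subject to the supply constraints with these weights. The first three properties are then immediate consequences of market-equilibrium structure, and only the SP-L claim requires real work.

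No regret is just the definition of equilibrium: $x_i^* \in d_i(p,B_i)$ means $x_i^*$ already maximizes $v_i \cdot x_i$ over $\{x_i : p\cdot x_i \le B_i\}$, so no affordable reallocation is preferred. For no scaled envy I would use the standard budget-adjusted argument: since any other individual $k$'s bundle costs $p \cdot x_k^* \le B_k$, the scaled bundle $\frac{B_i}{B_k} x_k^*$ is affordable to $i$, so no regret gives $v_i \cdot x_i^* \ge \frac{B_i}{B_k}\, v_i \cdot x_k^*$. Specializing to $B_i = \bar B \ge B_k$ (an individual in the high-budget group) yields $v_i \cdot x_i^* \ge v_i \cdot x_k^*$, i.e.\ no envy at all; specializing to $B_i=1$, $B_k=\bar B$ (a low-budget individual looking at a high-budget one) yields $\bar B\,(v_i \cdot x_i^*) \ge v_i \cdot x_k^*$, i.e.\ no envy after scaling one's own utility by $B_1=\bar B$. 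Pareto optimality follows because any allocation Pareto-dominating $x^*$ would, since each $B_i>0$ and $\log$ is strictly increasing, strictly raise the Eisenberg--Gale objective, contradicting optimality of $x^*$.

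The substantive part is SP-L, where the difficulty is that $\bar B$ is itself a function of the reported valuations, so a deviation can in principle move $\bar B$ for the entire high-budget group. I would decompose the effect of a unilateral misreport into two pieces. Holding $\bar B$ fixed, CEEqI is an ordinary fixed-budget CEEI, so the general-budget version of Theorem~\ref{thm:ceei spl} bounds the resulting utility gain by $O(\frac{1}{n})$. It then remains to control the feedback through $\bar B$. Writing $g(B_1) = U(x^*_{B_1})_1 - U(x^*_{B_1})_0$, a single report among $n$ perturbs each group average and, by the price-stability lemma underlying Theorem~\ref{thm:ceei spl}, the equilibrium prices by $O(\frac{1}{n})$, so $g$ shifts pointwise by $O(\frac{1}{n})$; using that $g$ is monotone with slope bounded away from $0$ near its root, the equalizing budget $\bar B$ itself moves by $O(\frac{1}{n})$. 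Continuity of the EG allocation in $B$ (the same continuity used to prove existence of $\bar B$) turns this into an $O(\frac{1}{n})$ change in $i$'s attained utility. Summing the two contributions bounds the manipulation gain by $O(\frac{1}{n})$, which vanishes, giving SP-L.

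I expect the main obstacle to be the quantitative stability of $\bar B$: monotonicity of $g$ is already noted, but to get an $O(\frac{1}{n})$ shift of its root I need a slope lower bound, a quantitative strengthening of monotonicity, and I would establish this (together with the requisite lower bound on prices) by importing the appendix lemmas behind Theorem~\ref{thm:ceei spl} and checking that they survive the passage to unequal budgets.
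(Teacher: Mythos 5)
Your treatment of the first three properties is correct and matches the paper's (one-line) argument: no regret is just the definition of demand, the budget-scaling inequality $v_i\cdot x_i^* \geq \frac{B_i}{B_k}\, v_i\cdot x_k^*$ gives exactly the scaled-envy statement in the theorem, and Pareto optimality follows from optimality of the Eisenberg--Gale objective. The substance is SP-L, and here you take a genuinely different route from the paper. You decompose a misreport into a fixed-$\bar B$ effect (covered by the general-budget version of Theorem~\ref{thm:ceei spl}) plus a feedback effect through $\bar B$, and try to show that $\bar B$ itself moves by only $O(\frac{1}{n})$. The paper never bounds how far $\bar B$ moves. Its argument is a sharing argument: a group-$1$ deviator who scales their report toward zero forces $B_1$ up until group $1$'s total utility rises by $u_i(x)$, an $O(1)$ quantity; but any bang-per-buck improvement the deviator obtains is also obtained, up to $O(\delta)$, by every individual in a $\delta$-ball of similar valuations, of which there are $\Theta(n)$ with high probability. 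Since $\Theta(n)$ people each capture at least $\Delta u_i - O(\delta)$ while the total required adjustment is $O(1)$, the deviator's own gain must vanish; arbitrary misreports are then reduced to scaled ones via Theorem~\ref{thm:ceei spl}.

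The gap in your route is the one you flag yourself, and it is real: you need $g(B_1)=U(x^*_{B_1})_1-U(x^*_{B_1})_0$ to have slope bounded away from zero near its root, so that an $O(\frac{1}{n})$ pointwise perturbation of $g$ moves the root by $O(\frac{1}{n})$. The paper only establishes monotonicity of $g$ (for the binary search), and the appendix lemmas you propose to import supply price lower bounds and price-perturbation bounds, not a derivative bound on a difference of group-average utilities. Establishing one would require a genuinely new argument about how marginal budget converts into marginal group utility at equilibrium---plausible under full support, but not automatic: if the two groups' demands concentrated on nearly disjoint items, $U_1$ would be nearly flat in $B_1$ and the root could move by $\Omega(1)$ under an $O(\frac{1}{n})$ perturbation. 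A second, smaller quantitative gap is that you invoke mere continuity of the EG allocation in $B$ to convert an $O(\frac{1}{n})$ shift of $\bar B$ into an $O(\frac{1}{n})$ utility change; for a rate you need a Lipschitz modulus, not continuity. The paper's sharing argument is worth internalizing precisely because it sidesteps both issues: it caps the deviator's gain directly from the equalization constraint, no matter how far $\bar B$ has to move.
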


The first three properties follow directly from CEEI being a market equilibrium. For SP-L we relegate the proof to the appendix; intuitively it is an application of Theorem~\ref{thm:ceei spl} (which tells us that individuals have negligible impact on prices as markets get large) with the additional note that as the market gets large the effect of any individual on the chosen value of $B_1$ goes to zero - thus both $B_1$ and prices become exogenous to any individual $i$ and reporting true valuations becomes utility maximizing.

%%% Local Variables:
%%% mode: latex
%%% TeX-master: "main"
%%% End:

\section{Experiment}

Thus, we now have everything required to perform a real experiment. We take a classic recommendation systems dataset, MovieLens, which has $1$ million ratings of $\sim 4000$ movies by $\sim 6000$ users. 

The rating data does not include all ratings for all movies. We use a standard technique from recommender systems to construct a low rank approximation to the full valuation matrix (see Appendix).

MovieLens contains metadata about the users rating movies. In particular we can look at user reported gender (dichotomized into male and female). We train a simple $2$ layer neural network to predict gender from the learned $U$ vectors and find we can achieve an out-of-sample AUC of approximately $\sim .76$. Thus, if we were to construct an allocation system based on these vectors, even though such a system does not \textit{explicitly} use gender as an input, its outputs may be quite different on gender lines. 

To further underscore this point, remember that the vector representations trained above embed users and movies into the same vector space. Thus, if we have a function $\hat{g}(u_i)$ which inputs a user vector and outputs a guess about the user's gender we can take a trained movie vector $m_j$ and pass it through $\hat{g}$ to get an estimate of the `genderedness' of the movie. 

In figure~\ref{fig:stereotypes}  we show the `most male' and `most female' movies according to this procedure. We see that indeed strong gender stereotypes are encoded into the movie vectors. 

\begin{figure*}[!ht]
  \vspace{-3mm}
\begin{center}
  \scriptsize
\begin{tabularx}{\textwidth}{| X | X | }
\hline
\textbf{Male Stereotyped Movies} & \textbf{Female Stereotyped Movies} \\
\hline
Half Baked (1998), Dumb and Dumber (1994), South Park: Bigger, Longer and Uncut (1999), Beavis and Butt-head Do America (1996), Happy Gilmore (1996), Evil Dead II (Dead By Dawn) (1987), Texas Chainsaw Massacre, The (1974), Aliens (1986), From Dusk Till Dawn (1996), Me, Myself and Irene (2000), Halloween (1978), Rocky IV (1985), Kingpin (1996), Waterboy, The (1998), Billy Madison (1995), Plan 9 from Outer Space (1958), Godzilla 2000 (1999), Dirty Work (1998), Conan the Barbarian (1982), Night of the Creeps (1986) & 
 Walk in the Clouds, A (1995), Mulan (1998), Parent Trap, The (1998), Ever After: A Cinderella Story (1998), Anastasia (1997), Out of Africa (1985), Color Purple, The (1985), Ghost (1990), Tarzan (1999), Mary Poppins (1964), Sense and Sensibility (1995), Titanic (1997), Little Mermaid, The (1989), Beauty and the Beast (1991), Pretty Woman (1990), Newsies (1992), Grease (1978), Dirty Dancing (1987), Sound of Music, The (1965), Condorman (1981) 
 \\
 \hline
 \end{tabularx}
 \scriptsize
\end{center}
\caption{Embedding systems trained on past ratings produce vectors that contain gender stereotypes. % A classifier trained on user vectors to predict gender achieves an AUC of approximately $.75$. This classifier can then be used to classify movies as male or female stereotypical. An allocation system, even one that the does not gender explicitly, but uses these vectors will produce allocations that have disparate impact.
}
\label{fig:stereotypes}
  \vspace{-3mm}
\end{figure*}

Perhaps in the case of allocating scarce movie theater seats these biases would not be a big deal (or maybe they would). However, if we were to think of a similar system being used for e.g. allocations of job ads, then things start to look different. Problems can also arise if the system is allocating goods to individuals where an adversary may want to use knowledge of an individual's allocation to try to reverse engineer a \textit{hidden} protected class membership (e.g. membership in a vulnerable group).

From the trained model we construct the complete matrix of valuations. We choose a submatrix of the top $1000$ most rated movies and top $1000$ users with the most ratings to construct $V$ for our market. We compute the various equilibria for our mechanisms using the EG convex program where the movies are items, by default have budget $1$, and all movies have supply $1$. 

\subsection{EqEEI Allocations}
In order to construct the EqEEI allocation we need to construct the `debiased' matrix $\hat{V}.$ To do so we parametrize $\hat{V} = f(V)$ via a neural network. To evaluate the constraint $\bar{\hat{v}}_1 = \bar{\hat{v}}_0$ we can express this via any distance between distributions. We choose the maximum mean discrepancy distance (MMD, \cite{fortet1953convergence}). The MMD is used in the literature on testing whether two probability distributions are equal from samples \citep{gretton2009fast,gretton2012kernel,lopez2016revisiting}.
\cite{gretton2012kernel} shows that in the case of two random probability distributions $X$ and $Y$ we can test if they are equal by taking a well behaved kernel $k$ (we use the Gaussian kernel) and computing $$\text{MMD}(X, Y) = \mathbb{E} \left[ k (X, X) \right] - 2 \mathbb{E} \left[ k(X, Y) \right] + \mathbb{E} \left[k(Y, Y) \right].$$ The MMD here will be $0$ iff $X=Y$. In finite samples, we simply take sample means from the distributions instead of the expectation. 

The literature on two-sample testing focuses on asking whether two distributions are equal (and so derives test statistics for finite sample applications of the MMD). We will instead use the MMD to make them so using a continuous relaxation of the constrained optimization: $$\max_{f} || V - f(V) ||_2 + \lambda \text{MMD}(\bar{\hat{v}}_1, \bar{\hat{v}}_0).$$  This formulation is non-convex (as $f$ will be a neural network) but in practice we find that it can be solved satisfactorily via stochastic gradient descent.

We apply MMD-based pre-processing to the user level vectors $U$. By setting $\lambda$ high enough (in our case, $2e6$) we can reduce the ability of a classifier to detect gender from the vectors. We find that when $U$ is transformed the AUC of a trained classifier on $f(V)$ is $0.5$, i.e. at chance levels. 

We now look at various losses in allocation quality from using EqEEI instead of CEEI. Let $x^*$ be the CEEI allocation for the original matrix $V$ and let $(\hat{x}^*, \hat{p}^*)$ be the EqEEI allocation (i.e. the one generated from CEEI on $\hat{V}$). We consider how enforcing a lack of disparate outcomes affects the following metrics, which are normalized such that each metric shows the fraction of the optimal value missing or obtained, depending on metric.
\begin{center}
\vspace{-2mm}
  \small
\setlength\tabcolsep{3.5pt}
\begin{tabular}{lll}  
  $\text{Regret}_i (\hat{x}_i, \hat{p})$ &=& $\dfrac{v_i \cdot \hat{x}_i - \bar{d}_i (\hat{p}, b_i) }{\bar{d}_i (\hat{p}, b_i)} $         \\
  \midrule
  $\text{Envy}_i (x)$ &=& $\dfrac{\max_{i'} v_i \cdot (x_i - x_i') }{\max_{i'} v_i \cdot x_i'}$      \\
\end{tabular}
\quad
\begin{tabular}{lll}  
$\text{Pareto Gap}(\hat{x})$ &=& $\dfrac{\sum_i v_i \cdot \hat{x}_i}{\sum_i v_i \cdot x^{P} (\hat{x})}$            \\
  \midrule
  $\text{Geometric Mean Gap}$ &=& $\dfrac{\text{exp}({\sum_i \text{log} (v_i \cdot \hat{x}_i)})}{\text{exp}({\sum_i \text{log} (v_i \cdot x^*)})}$         \\
  \midrule
  $\text{Efficiency Gap}(x^*_i, \hat{x}_i)$ &=& $\dfrac{\sum_i \hat{x}_i \cdot v_i}{\sum_{i} x^*_i \cdot v_i}$      \\
\end{tabular} \\
\vspace{-0.5mm}
\end{center}

We have two metrics that are for each individual $i$. $\text{Regret}_i (\hat{x}_i, \hat{p})$ is how close $i$ is to achieving the utility of their demand.  $\text{Envy}_i (x)$ is how much $i$ would prefer the allocation of some other individual. Recall that in true CEEI regret and envy is $0$.

We have three `market level' metrics. A market equilibrium is guaranteed to be Pareto efficient - nobody can be made better off without making someone else worse off. This may not be preserved in EqEEI however. $\text{Pareto Gap}(\hat{x})$ is the social welfare achieved at the solution $\hat x$ divided by the social welfare at the best Pareto-improving allocation $x^P(\hat x)$. $\text{Geometric Mean Gap}$ is the geometric mean of utilities achieved in the solution $\hat x$ divided by the geometric mean of utilities in the optimal solution $x^*$ to the original problem. Finally,  $\text{Efficiency Gap}(x^*_i, \hat{x}_i)$ is what fraction of the social welfare from the original CEEI solution is obtained in EqEEI or CEEqI.

In our MovieLens constructed market we see that each metric has a modest $5$-$10\%$ loss from using EqEEI over CEEI:
\begin{center}
  \vspace{-3mm}
  \small
\begin{tabular}{ c |  ccccc} 
   Metric & Regret & Envy & Pareto Gap & Geometric Mean Gap & Efficiency Gap \\
   \hline
   EqEEI loss & -0.1 & -0.08 & 0.95 & 0.94 & 0.94 \\
\end{tabular}
  \vspace{-3mm}
\end{center}

%\begin{figure*}[]
%\begin{center}
% \begin{tabular}{ c |  ccccc} 
%   Metric & Regret & Envy & Pareto Gap & Geometric Mean Gap & Efficiency Gap \\
  % \hline
  % EqEEI Loss & -.10 & -.08 & -.05 & -.06 & -.06 \\
%\end{tabular}
%\vspace{-3mm}
%  \begin{tabular}{ c |  c } 
%    Metric & EqEEI Loss \\
%    \hline
%    Regret & -.10 \\
%    Envy & -.08 \\
%    Pareto Gap & -.05 \\
%    Geometric Mean Gap & -.06 \\
%    Efficiency Gap & -.06\\
% \end{tabular}
%\end{center}
%\caption{EqEEI results: Applying the MMD-based transformation to remove gender information from our trained vectors and then running CEEI on the new matrix removes disparate impact at a modest cost to some mechanism metrics.}
%\label{fig:cost of debiasing}
%\end{figure*}

\subsection{CEEqI Allocations}
We now study CEEqI allocations. When we compute CEEI for the $1000$ by $1000$ market described above, we find only a minor imbalance between class utilities (on the order of $5 \%$). To really study the tradeoffs between class-level distributional fairness and other objectives, we construct an instance with larger class imbalance. To do so, we construct a new valuation matrix by multiplying the $v_i$ vectors of all individuals in class $1$ by $.75$.  

This yields CEEI outcomes with substantial disparate impact in utilities (on the order of $\sim 25 \%$). We now consider using CEEqI to remedy this issue. Recall that CEEqI only changes budgets - therefore it generates Pareto optimal, no regret, no (scaled) envy allocations. For evaluation, we only consider the geometric mean gap and the inequity as a function of budget for the protected group (the other group receiving always a budget of $1$). 

Figure \ref{fig:CEEqI} shows that changing budgets equalizes average utilities while only changing the geometric mean utility attained by the equilibrium a little bit. Intuitively this happens for several reasons. First, the class 1 only forms about $\sim 30 \%$ of the market. Second, the item transfer is efficient in the sense that we know from examples above that there are items that are strongly preferred by one class. This means that we can increase the average utility of class $1$ by moving those items from class $0$ at a relatively small cost (since class $0$ individuals do not actually value those items very highly). 

% \begin{figure*}[h!]
% \begin{center}
% \includegraphics[scale=.5]{ml100_workshopfig.pdf}
% \end{center}
% \caption{CEEqI results: Even after modifying the valuation matrix to increase group level disparate outcomes in utility we see that CEEqI can reduce this difference at only a small cost to the overall objective of the geometric mean.}
% \label{fig:CEEqI}
% \end{figure*}
\begin{figure}[h!]
\begin{center}
  \begin{minipage}[t]{0.57\textwidth}
\vspace{0pt}\hfill
 \includegraphics[scale=.4]{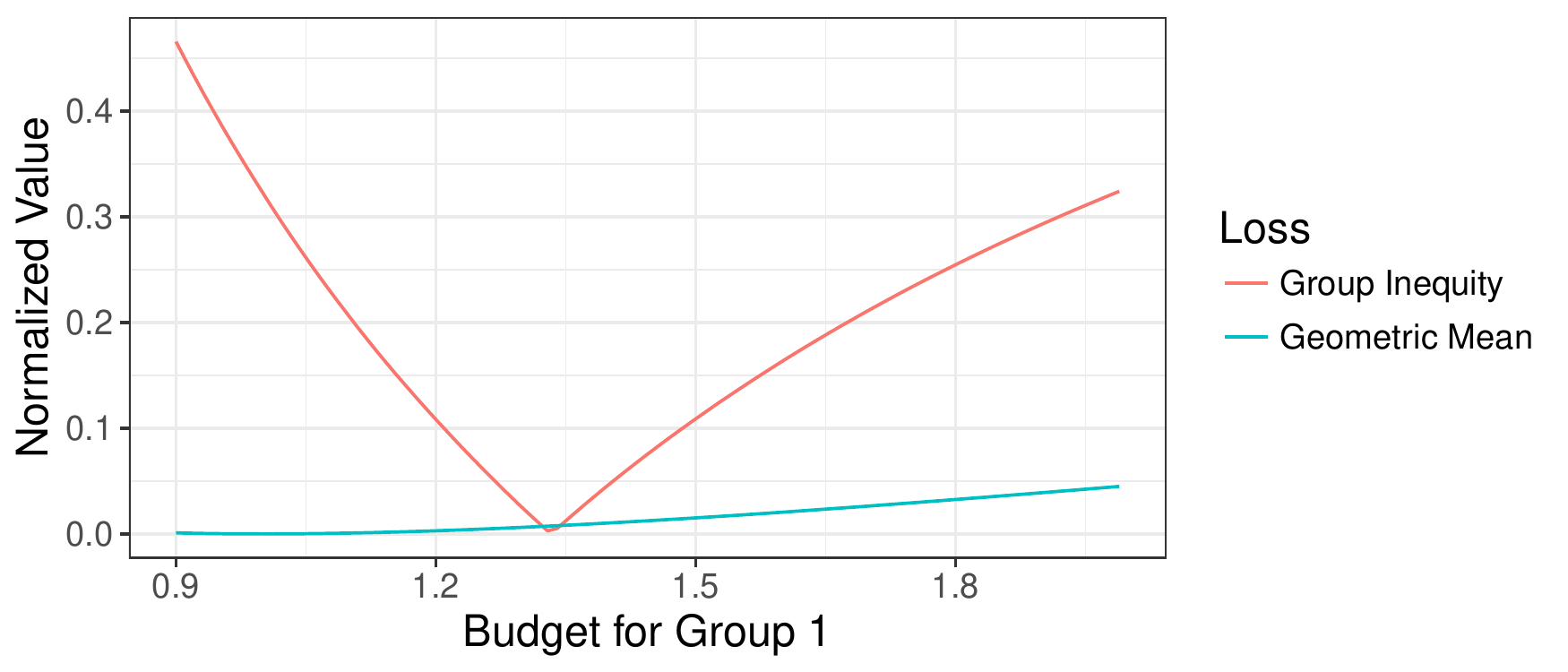}
  \end{minipage}
  \begin{minipage}[t]{0.35\textwidth}
\vspace{0pt}
    \caption{ CEEqI results. Even after modifying the valuation matrix to increase group level disparate outcomes in utility we see that CEEqI can reduce this difference at only a small cost to the overall objective of the geometric mean.
    } \label{fig:CEEqI}
    \vfill\hfill
  \end{minipage}
\vspace{-7mm}
\end{center}
\end{figure}

\section{Conclusion}
The notion of fairness of treatment has become an important research area in machine learning. Mechanism design is a practical field. We do not claim to have solved all problems of `unbiased' mechanisms in this paper. Rather, we offer mechanism designers a particular tool that can be useful in some circumstances. In addition, we hope that this paper can initiate a larger conversation about the role of fairness in various forms in mechanism design.

%We have tried to bring some ideas and questions from this literature to the question of mechanism design which typically views fairness only in distributional terms. We have shown that questions and methods from algorithmic classification have analogues in the case of the division of items.

% Bibliography
\bibliographystyle{ACM-Reference-Format}
%\bibliography{refs}
\bibliography{main_fair_unbiased.bbl}

\clearpage
\appendix
\section{Utility Prices and CEEI}
The rest of the section assumes $s_j=1$ for all $j$, which is without loss of generality. The set of utility prices that we are interested in are those such that they correspond to feasible allocations and prices in the following sense
\begin{defn}
  A vector $\beta \in \mathbb{R}_{++}^n$ of \emph{utility prices} are budget feasible if there exists $x,p$ such that $p_j = \max_i \beta_i v_{ij}$ for all $j$, $x_{ij} > 0 \Rightarrow i \in \argmax_{i'} \beta_{i'}v_{i'j}$, and $p\cdot x_i \leq B_i$ for all $i$.
\end{defn}

We now show that taking the elementwise max over a pair of utility-price vectors preserves budget feasibility. We then show that CEEI corresponds to the unique maximal point of the set of budget-feasible utility prices.

\begin{lemma}[Elementwise max preserves feasibility]
  Let $\beta, \beta'$ be budget-feasible utility prices. Then their elementwise max $\beta^+ = \max (\beta, \beta')$ is also budget feasible.
\end{lemma}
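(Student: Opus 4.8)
The plan is to build, out of the two witnesses $(x,p)$ and $(x',p')$ that certify budget feasibility of $\beta$ and $\beta'$, an explicit witness $(x^+,p^+)$ for $\beta^+ = \max(\beta,\beta')$, checking the three defining conditions in turn. The first step is to pin down the prices. Since every $v_{ij} > 0$, we have $\beta^+_i v_{ij} = \max(\beta_i,\beta'_i)\,v_{ij} = \max(\beta_i v_{ij},\,\beta'_i v_{ij})$, and taking the maximum over $i$ gives $p^+_j = \max_i \beta^+_i v_{ij} = \max(p_j,p'_j)$ for every item $j$. So the price vector that $\beta^+$ induces is exactly the elementwise max of $p$ and $p'$.

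Next I would define the candidate allocation by gluing $x$ and $x'$ along the prices: partition the items into $J = \{\, j : p_j \ge p'_j \,\}$ and $J' = \{\, j : p'_j > p_j \,\}$, and set $x^+_{ij} = x_{ij}$ for $j \in J$ and $x^+_{ij} = x'_{ij}$ for $j \in J'$. By construction $x^+$ inherits the supply balance of $x$ on $J$ and of $x'$ on $J'$. The allocation-to-maximizers condition is then immediate: if $x_{ij} > 0$ with $j \in J$, then $i$ maximizes $\beta_{i'}v_{i'j}$, so $\beta^+_i v_{ij} \ge \beta_i v_{ij} = p_j = p^+_j$, and since $p^+_j$ is the maximum this forces $\beta^+_i v_{ij} = p^+_j$, i.e. $i \in \argmax_{i'}\beta^+_{i'}v_{i'j}$; the case $j \in J'$ is symmetric.

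The one genuinely nontrivial condition, and the step I expect to be the crux, is the budget bound $p^+ \cdot x^+_i \le B_i$: naively an individual could be buying items from both $J$ and $J'$, and the two partial bills need not sum to at most $B_i$. The observation that rescues this is that each individual is priced out on whichever side does not determine their utility price. Concretely, fix $i$ and suppose $\beta^+_i = \beta_i$ (the case $\beta^+_i = \beta'_i$ is symmetric). If $j \in J'$, then $x'_{ij} > 0$ would force $\beta'_i v_{ij} = p'_j$, whence $\beta_i v_{ij} \ge \beta'_i v_{ij} = p'_j > p_j \ge \beta_i v_{ij}$, a contradiction; hence $x^+_{ij} = x'_{ij} = 0$ for all $j \in J'$. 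Therefore $x^+_i$ is supported on $J$ and agrees there with $x_i$, so $p^+ \cdot x^+_i = \sum_{j \in J} p_j x_{ij} \le p \cdot x_i \le B_i$. This verifies all three defining conditions for $(x^+,p^+)$, establishing that $\beta^+$ is budget feasible.

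Finally I would dispose of ties ($p_j = p'_j$) with a short bookkeeping remark: such items sit in $J$ by the convention $p_j \ge p'_j$, and the same priced-out argument (with the roles of $\beta$ and $\beta'$ swapped) shows that any $i$ with $\beta^+_i = \beta'_i > \beta_i$ has $x_{ij} = 0$ on every $j \in J$, tie items included. Thus no individual ever collects an item from the wrong side, the supports stay disjoint in the relevant sense, and the budget estimate above goes through unchanged.
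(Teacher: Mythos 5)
Your proof is correct, and it glues the two witnesses together along a different axis than the paper does. The paper's construction is \emph{individual-wise}: each $i$ keeps the bundle $\hat x_i$ from whichever of the two markets gives $i$ the larger utility price, and then prunes away any items on which $i$ no longer attains $\max_{i'}\beta^+_{i'}v_{i'j}$; the budget bound follows because on the surviving items the new price $p^+_j$ coincides with the price $i$ was already paying. Your construction is \emph{item-wise}: each item is assigned according to whichever market prices it higher, and the work goes into your ``priced-out'' observation that an individual with $\beta^+_i = \beta_i$ can hold nothing on the $J'$ side (and symmetrically), so no individual ever pays a bill from both sides. The two arguments are dual views of the same phenomenon and in fact produce essentially the same allocation, but yours has two concrete advantages: it makes supply feasibility transparent (each item's total assignment is exactly that of $x$ or of $x'$, whereas the paper's per-individual pruning leaves this implicit), and it isolates the identity $p^+ = \max(p,p')$ as an explicit first step, which the paper uses only tacitly. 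Your handling of ties and of individuals with $\beta_i = \beta'_i$ is also careful where the paper is silent. No gaps.
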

\begin{proof}
  Since $\beta$ is budget feasible there exists $x,p$ such that $p_j = \max_i \beta_i v_{ij}$ for all $j$ and $p\cdot x_i \leq B_i$ for all $i$. Similarly let $x',p'$ be such a pair for $\beta'$. Now let $p^+ = \max_i \beta^+ v_{ij}$. We construct the allocation $x^+$ as follows: for all $i$ let $\hat x_i, \hat p^i$ be the allocation and price vectors corresponding to the allocation of $i$ under the max of $\beta_i, \beta_i'$, we set $x^+_{ij} = \hat x_{ij}$ for all $j$ such that $\beta^+_iv_{ij} = \max_{i'} \beta^+_{i'}v_{i'j}$, and we set $x^+_{ij} = 0$ otherwise. Now for any $j$ such that $x^+_{ij}>0$ we have $p^+_j = p^i_j$, and for all $j$ we have $x^+_{ij} \leq \hat x_{ij}$ since $\max_{i'}\beta^+_{i'} v_{i'j}$ weakly increased relative to both $\beta$ and $\beta'$. It follows that $p^+ \cdot x^+_i \leq B_i$.
\end{proof}

\begin{lemma}[CEEI corresponds to maximal $\beta$]
  The CEEI solution corresponds to the unique maximal utility prices $\beta$ such that $\beta \geq \beta'$ for all $\beta'$ that are budget feasible.
  \label{lem:ceei maximal utility prices}
\end{lemma}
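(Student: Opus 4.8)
The plan is to verify two things: that the CEEI utility prices $\beta^{\mathrm{CEEI}}$ are themselves budget feasible, and that they dominate every budget-feasible vector componentwise; uniqueness of a greatest element is then automatic. Throughout I read budget feasibility as requiring the associated allocation to clear the market (each item fully sold, $\sum_i x_{ij}=1$), since otherwise the trivial allocation $x=0$ would make every $\beta$ budget feasible and no maximal point could exist.

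First I would record that $\beta^{\mathrm{CEEI}}$ is budget feasible. Taking the equilibrium prices $p^{\mathrm{CEEI}}$ and allocation $x^{\mathrm{CEEI}}$, the bang-per-buck characterization of the Eisenberg--Gale equilibrium gives $p^{\mathrm{CEEI}}_j=\max_i\beta^{\mathrm{CEEI}}_i v_{ij}$ and $x^{\mathrm{CEEI}}_{ij}>0\Rightarrow i\in\argmax_{i'}\beta^{\mathrm{CEEI}}_{i'}v_{i'j}$, while each agent spends exactly their budget, $p^{\mathrm{CEEI}}\cdot x^{\mathrm{CEEI}}_i=B_i$. I would also record the accounting identity obtained by summing the budget constraints over any market-clearing allocation: $\sum_j p_j=\sum_j p_j\sum_i x_{ij}=\sum_i p\cdot x_i$, so at the equilibrium $\sum_j p^{\mathrm{CEEI}}_j=\sum_i B_i$, and for an arbitrary budget-feasible $\beta$ with market-clearing $(x,p)$ we get $\sum_j p_j=\sum_i p\cdot x_i\le\sum_i B_i$.

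For maximality, let $\beta'$ be any budget-feasible vector and set $\beta^+=\max(\beta',\beta^{\mathrm{CEEI}})$, which is budget feasible by the previous lemma; let $(x^+,p^+)$ be an associated market-clearing configuration. Since $\beta^+\ge\beta^{\mathrm{CEEI}}$ componentwise, $p^+_j=\max_i\beta^+_i v_{ij}\ge\max_i\beta^{\mathrm{CEEI}}_i v_{ij}=p^{\mathrm{CEEI}}_j$ for every $j$. But the accounting identity gives $\sum_j p^+_j\le\sum_i B_i=\sum_j p^{\mathrm{CEEI}}_j$, so the two price vectors must coincide, $p^+=p^{\mathrm{CEEI}}$. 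Now fix any agent $i$; because valuations are strictly positive, $i$ attains positive utility at the CEEI and hence wins some item $j$ with $\beta^{\mathrm{CEEI}}_i v_{ij}=p^{\mathrm{CEEI}}_j$. Then $p^+_j\ge\beta^+_i v_{ij}\ge\beta^{\mathrm{CEEI}}_i v_{ij}=p^{\mathrm{CEEI}}_j=p^+_j$ forces equality throughout, and dividing by $v_{ij}>0$ yields $\beta^+_i=\beta^{\mathrm{CEEI}}_i$, i.e. $\beta'_i\le\beta^{\mathrm{CEEI}}_i$. As $i$ was arbitrary, $\beta'\le\beta^{\mathrm{CEEI}}$, establishing that $\beta^{\mathrm{CEEI}}$ is a greatest element of the budget-feasible set; greatest elements are unique, which is the claim.

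I expect the main obstacle to be the definitional point flagged above: the argument hinges on reading budget feasibility as market-clearing, so that the spending identity $\sum_j p_j=\sum_i p\cdot x_i$ and the bound $\sum_j p_j\le\sum_i B_i$ are available, together with full budget expenditure at the equilibrium. Once that is pinned down, the only other delicate point is guaranteeing each agent wins at least one item, which follows from strict positivity of valuations and budgets; the elementwise-max lemma then does the rest of the work by reducing the comparison against an arbitrary $\beta'$ to the clean one-sided comparison $\beta^+\ge\beta^{\mathrm{CEEI}}$.
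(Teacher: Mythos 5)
Your proof is correct, and it takes a more direct route than the paper's. The paper argues by elimination: it observes that CEEI satisfies $\sum_j p_j = \sum_i B_i$, claims that any non-maximal budget-feasible $\beta$ has $\sum_j p_j < \sum_i B_i$ (via the existence of a dominating $\beta'$ with strictly larger price sum), and then invokes compactness of the feasible set of utility prices to assert that a maximal element exists, so CEEI must be it. You instead prove domination head-on: for an arbitrary budget-feasible $\beta'$ you form $\beta^+ = \max(\beta', \beta^{\mathrm{CEEI}})$, use the elementwise-max lemma plus the accounting identity to force $p^+ = p^{\mathrm{CEEI}}$, and then exploit the fact that every agent wins some item at the equilibrium (full budget expenditure plus strictly positive valuations) to pin down $\beta^+_i = \beta^{\mathrm{CEEI}}_i$ coordinate by coordinate. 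This buys you two things the paper's sketch glosses over: you do not need the strict inequality $\sum_j p'_j > \sum_j p_j$, which the paper asserts without justifying why some price must strictly increase when only one coordinate of $\beta$ does, and you do not need a separate existence-of-a-maximum argument, since the greatest element is exhibited explicitly. Your preliminary observation that budget feasibility must be read as including market clearing (otherwise $x=0$ certifies every $\beta$ and no maximal point exists) is also a genuine and necessary clarification of the paper's definition, which the paper itself uses implicitly when it bounds $\beta_i$ by $\sum_{i'} B_{i'} / \sum_j v_{ij}$.
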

\begin{proof}
  First we note that $\sum_j p_j = \sum_i B_i$ in CEEI. Thus, any non-maximal $\beta$ cannot be a CEEI solution, since there must then exist $\beta' \geq \beta, \beta' \ne \beta$ which is budget feasible. But then the prices $p'$ under $\beta'$ are still budget feasible and $\sum_{j} p_j' > \sum_j p_j$; this shows that $\sum_j p_j < \sum_i B_i$ for all non-maximal $\beta$. Furthermore, a maximal $\beta$ is guaranteed to exist, as we know that $\beta \in (0, \max_i \frac{\sum_{i'} B_{i'}}{\sum_j v_{ij}})$ and thus the set of possible choices for $\beta$ is closed and bounded.
  Since the CEEI solution is guaranteed to exist it must correspond to the unique maximal $\beta$.
\end{proof}

\section{Proof that CEEI is SPL}
Our proof relies on the maximality structure of utility prices shown in Lemma~\ref{lem:ceei maximal utility prices} .
% We begin with the following lemma. A version of it was proven in the quasi-linear case by \citet{conitzer2019pacing} though they show it for what they call pacing equilibria, which are equivalent to Eisenberg-Gale solutions in quasi-linear Fisher markets. 

% \begin{lemma}
%   The utility prices $\beta \in \mathbb{R}^n_{++}$ in the Eisenberg-Gale convex program are monotonic: for any pair of utility prices $\beta, \beta'$ such that both have associated allocations $x,x'$ (and prices $p,p'$, which are uniquely determined by the choice of $\beta$) such that each individual $i$ is within budget, the component-wise max of $\beta$ and $\beta'$ also has an associated budget-feasible allocation. Furthermore, the CEEI solution is the maximum over the set of budget-feasible utility prices.
% \end{lemma}

% To show this Lemma in the CEEI setting, which is not quasi-linear, it suffices to take the original proof and note that the set of budget-feasible $\beta$ is still bounded in the CEEI setting: for any individual $i$, their multiplier $\beta_i$ can be at most $\frac{\sum_i B_i}{\sum_jv_{ij}}$, since otherwise the sum of prices is higher than the sum of budgets, which is impossible in CEEI.

%\subsection{Proof of Lemma~\ref{lem:small price effect}}
\begin{lemma}
  For any market, an individual can affect the price $p_j$ of any item $j$ by at most $\frac{B_i}{s_j}$.
  \label{lem:small price effect}
\end{lemma}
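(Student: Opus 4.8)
The plan is to compare the equilibrium of the full market, in which individual $i$ participates with budget $B_i$, to that of the reduced market obtained by deleting individual $i$ while keeping all supplies $s_j$ and all other budgets fixed. Write $p$ for the full-market CEEI prices and $p^{-i}$ for the reduced-market prices. The bound follows from two facts. First, a conservation identity: at any CEEI every individual spends their entire budget, so the total money spent on the items equals the total budget, giving $\sum_j p_j s_j = \sum_{i'} B_{i'}$ and $\sum_j p^{-i}_j s_j = \sum_{i' \ne i} B_{i'}$, whence $\sum_j (p_j - p^{-i}_j) s_j = B_i$. Second, a monotonicity fact: deleting individual $i$ can only weakly decrease every price, i.e. $p^{-i}_j \le p_j$ for all $j$. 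Granting these, the quantities $(p_j - p^{-i}_j) s_j$ are nonnegative and sum to $B_i$, so each is at most $B_i$; dividing by $s_j$ gives $0 \le p_j - p^{-i}_j \le B_i / s_j$, which is the claim.

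The conservation identity is routine: since valuations are strictly positive and equilibrium prices $p_j = \max_{i'} \beta_{i'} v_{i'j}$ are strictly positive, each individual's demand exhausts their budget, so $p \cdot x_{i'} = B_{i'}$ for all $i'$; summing and using market clearing $\sum_{i'} x_{i'j} = s_j$ yields both equalities (and the analogue in the reduced market). The real work is the monotonicity fact, and here I would lean on the maximality characterization of CEEI utility prices in Lemma~\ref{lem:ceei maximal utility prices}. Let $\beta$ be the maximal budget-feasible utility prices of the full market and let $\gamma$ be any budget-feasible utility-price vector of the reduced market (indexed by $i' \ne i$), witnessed by a clearing allocation $x$ and prices $q_j = \max_{i' \ne i} \gamma_{i'} v_{i'j}$. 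I claim that appending a sufficiently small positive entry $\delta$ for individual $i$ yields a vector $(\gamma, \delta)$ that is budget-feasible in the full market. Indeed, for $\delta$ small enough we have $\delta v_{ij} < q_j$ for every $j$ (all $q_j > 0$ since $\gamma \gg 0$ and $v > 0$), so individual $i$ lies in no argmax set; the full-market price vector induced by $(\gamma, \delta)$ then coincides with $q$, and the same allocation $x$ (giving $i$ nothing) witnesses budget feasibility. By maximality of $\beta$ we get $(\gamma, \delta) \le \beta$ componentwise, so $\gamma_{i'} \le \beta_{i'}$ for all $i' \ne i$. As this holds for every budget-feasible $\gamma$ of the reduced market, it holds for its maximal element $\beta^{-i}$, and hence $p^{-i}_j = \max_{i' \ne i} \beta^{-i}_{i'} v_{i'j} \le \max_{i' \ne i} \beta_{i'} v_{i'j} \le \max_{i'} \beta_{i'} v_{i'j} = p_j$, which is exactly the needed monotonicity.

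The main obstacle is this monotonicity step. Intuitively ``removing a competitor cannot raise any price,'' but substitution effects mean that the deleted individual's own spending on item $j$ need not equal the revenue lost on $j$, so a direct item-level accounting fails. The maximality characterization sidesteps this by comparing the two markets at the level of utility prices rather than item prices, and the small-$\delta$ embedding is what transfers feasibility from the reduced market into the full market. A minor point to verify is that the argument remains valid for general supplies $s_j$ rather than the normalized $s_j = 1$ under which the maximality lemma was stated; this holds after the obvious rescaling.

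Finally, for the strategyproofness application one compares two different reports of $i$ rather than presence versus absence. Applying the bound above to each report relative to the common reduced market and using the triangle inequality shows that changing $i$'s report moves any price $p_j$ by at most $2B_i/s_j = O(B_i/s_j)$, which is all that is required for the subsequent $O(1/n)$ conclusion.
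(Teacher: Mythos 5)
Your proposal is correct and follows essentially the same route as the paper: compare to the market with individual $i$ removed, establish price monotonicity by embedding the reduced-market solution as a budget-feasible utility-price vector in the full market and invoking the maximality characterization of CEEI, then use the budget-conservation identity $\sum_j s_j p_j = \sum_{i'} B_{i'}$ to bound each nonnegative supply-weighted price increase by $B_i$. (Your small-$\delta$ embedding is a slightly more careful version of the paper's choice $\beta_i = 0$, and in the final remark note that since the prices under any two reports both lie in the interval $[p^{-i}_j,\ p^{-i}_j + B_i/s_j]$, the difference between them is in fact at most $B_i/s_j$ rather than $2B_i/s_j$.)
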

\begin{proof}
  Let $i$ be an arbitrary individual. Now consider the CEEI solution $(x,p)$ when $i$ does not participate in the market, and let $\beta \in \mathbb{R}^{n}$ be the associated per-individual utility prices, with $\beta_i=0$. We know that $(\beta, x, p)$ constitutes a budget-feasible solution to the market that includes $i$. Now, by the monotonicity of utility prices we know that the CEEI for the market that includes $i$ must have utility prices $\beta' \geq \beta$.
  It follows from the utility-rate monotonicity that prices must go up, but we also know that the new prices $p'$ must satisfy $\sum_j s_jp_j' = \sum_{i'} B_i$, whereas $\sum_j s_jp_j = \sum_{i' \ne i} B_i$. It follows that, at worst, individual $i$ can increase the supply-weighted price $s_jp_j'$ by at most $B_i$, since otherwise $\sum_j s_jp_j' > \sum_{i'}B_i$.
  It follows that no matter what value individual $i$ reports, $p_j' \in \left[p_j, p_j + \frac{B_i}{s_j}\right]$.
\end{proof}
If we were not interested in obtaining a rate of convergence to SP-L, then Lemma~\ref{lem:small price effect} could be combined with the results of \citet{roberts1976incentives,jackson1997approximately} to show SP-L. A similar rate-less result follows from \citet{roberts1976incentives}.

%\subsection{Proof of Lemma~\ref{lem:price grows}}

\begin{lemma}
  For all $j \in \mathcal{J}$ and any set of individuals $\mathcal{I}'$ such that some $i$ in $\mathcal{I}'$ has $v_{ij} > 0$, we have
  \[
    p_j \geq \frac{|\mathcal{I}'|\min_iB_i}{n\sum_{j' \in \mathcal{J}}c_{j'} v_j''}v_j' \eqdef p_j^{\downarrow},
  \]
  where $v_j' \leq \min_{i\in \mathcal{I}'}v_{ij}$ and $v_{j'}'' \geq \max_{i\in \mathcal{I}'} v_{ij'}$ for all $j' \in \mathcal{J}$.
  \label{lem:price grows}
\end{lemma}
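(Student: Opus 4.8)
The plan is to translate the lower bound on $p_j$ into a statement about the utility prices $\beta_i$ of the individuals in $\mathcal{I}'$ who value item $j$, and then extract the factor $|\mathcal{I}'|$ through an aggregate budget-versus-total-value counting argument. Recall from the utility-price characterization of CEEI that in equilibrium each individual $i$ spends their full budget and has a single utility price $\beta_i = B_i/(v_i\cdot x_i)$, with $\beta_i = p_{j'}/v_{ij'}$ for every purchased item and, by the best-bang-per-buck property, $\beta_i \le p_{j'}/v_{ij'}$ for \emph{every} item $j'$. Applying this to item $j$ gives $p_j \ge \beta_i v_{ij} \ge \beta_i v_j'$ for every $i \in \mathcal{I}'$, since $v_{ij} \ge \min_{i'\in\mathcal{I}'} v_{i'j} \ge v_j'$.

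First I would rewrite this as $\beta_i \le p_j/v_j'$, equivalently $v_i\cdot x_i = B_i/\beta_i \ge B_i v_j'/p_j$, for every $i \in \mathcal{I}'$. Summing over $\mathcal{I}'$ and using $\sum_{i\in\mathcal{I}'} B_i \ge |\mathcal{I}'|\min_i B_i$ produces a lower bound $\sum_{i\in\mathcal{I}'} v_i\cdot x_i \ge (v_j'/p_j)\,|\mathcal{I}'|\min_i B_i$ on the total utility accruing to $\mathcal{I}'$. Second, I would bound this same total utility \emph{from above} by the total value of the entire supply: since allocations of any single item sum to at most its supply, $\sum_{i\in\mathcal{I}'} x_{ij'} \le s_{j'}$, and $v_{ij'} \le v_{j'}''$, we get $\sum_{i\in\mathcal{I}'} v_i\cdot x_i = \sum_{i\in\mathcal{I}'}\sum_{j'} v_{ij'} x_{ij'} \le \sum_{j'} v_{j'}'' s_{j'} = n\sum_{j'} c_{j'} v_{j'}''$, using $s_{j'}=c_{j'}n$. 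Chaining the two inequalities and rearranging yields exactly $p_j \ge \frac{|\mathcal{I}'|\min_i B_i}{n\sum_{j'} c_{j'} v_{j'}''}v_j'$.

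The main idea (and the only nonobvious step) is obtaining the factor $|\mathcal{I}'|$: a naive single-agent bound gives only $p_j \ge (\min_i B_i/M)\,v_j'$ with $M = n\sum_{j'}c_{j'}v_{j'}''$, and the extra factor appears precisely because all individuals of $\mathcal{I}'$ compete for a shared, supply-limited pool whose total value to them is capped by $M$ no matter how many of them there are; summing their budgets then forces each unit of their collective budget to buy only expensive utility, i.e.\ drives $p_j$ up. The remaining care points are minor: strict positivity of valuations guarantees each $\beta_i$ is well defined, positive, and finite (so $v_i\cdot x_i>0$), and guarantees $M>0$, so every division above is legitimate.
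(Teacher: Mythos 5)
Your proof is correct and follows essentially the same route as the paper's: both rest on the dual/bang-per-buck inequality $p_j \ge \beta_i v_{ij} \ge \beta_i v_j'$ together with the upper bound $\sum_{i\in\mathcal{I}'} v_i\cdot x_i \le n\sum_{j'}c_{j'}v_{j'}''$ on the group's total utility from the limited supply. The only cosmetic difference is that you sum the per-agent bounds $B_i/\beta_i \ge B_i v_j'/p_j$ over all of $\mathcal{I}'$ directly, whereas the paper bounds $\sum_{i\in\mathcal{I}'} 1/\beta_i$ and pigeonholes to a single $i^*$ with large $\beta_{i^*}$ before invoking the dual constraint $p_j \ge \beta_{i^*}v_{i^*j}$.
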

\begin{proof}
  We have $\sum_{i \in \mathcal{I}'} u_i \leq \sum_{j' \in \mathcal{J}}f_{j'}(n) v_{j'}''$, which implies
  \begin{align*}
    \sum_{i \in \mathcal{I}'} \frac{B_i}{\beta_i} =
    \sum_{i \in \mathcal{I}'} u_i 
     \leq \sum_{j' \in \mathcal{J}}f_{j'}(n) v_{j'}''
    \Rightarrow
    \sum_{i \in \mathcal{I}'} \frac{1}{\beta_i} \leq \frac{\sum_{j' \in \mathcal{J}}f_{j'}(n) v_{j'}''}{\min_i B_i} \leq \frac{\sum_{j' \in \mathcal{J}}c_{j'}n v_{j'}''}{\min_i B_i}
  \end{align*}
  where $c_{j'}$ is an upper bound on the growth rate of the supply of item $j'$. For the above inequality to hold there must exist some $i^*$ such that
  \[
    \frac{1}{\beta_{i^*}} \leq \frac{n\sum_{j' \in \mathcal{J}}c_{j'} v_{j'}''}{|\mathcal{I}| \min_i B_i}
  \]
  which implies
  \[
    \beta_{i^*} \geq \frac{|\mathcal{I}|\min_i B_i}{n\sum_{j' \in \mathcal{J}}c_{j'} v_{j'}''}
  \]
  Now using the dual constraint $p_j \geq \beta_{i^*}v_{i^*j}$ we get
  \[
    p_j \geq \frac{|\mathcal{I}| \min_i B_i}{n \sum_{j' \in \mathcal{J}}c_{j'} v_{j'}''} v_{j'}'
  \]
\end{proof}

We now can complete the proof that any deviation gains in CEEI are of $O(\frac{1}{n}).$

\begin{proof}
  Consider some arbitrary individual $i$ with valuation vector $v_i$. 
  First we bound the gain from some specific item $j$. Pick some  $v_j'>0, v_j''$ and let $\mathcal{I}'_j$ be the set of individuals in a given market such that $v_j' \leq v_{i'j} \leq v_j''$ for all $i' \in \mathcal{I}'_j$. Let the probability of sampling an individual belonging to $\mathcal{I}'_j$ be $\theta_j$. Such a set is guaranteed to exist since $f$ has full support, and all valuations belong to some set with positive measure. When sampling a market of size $n$, the expected size of $\mathcal{I}'_j$ is $\expectation(|\mathcal{I}'_j|)=n\theta_j$. Now we can use Hoeffding's inequality to get that $|\mathcal{I}'_j|$ has size less than $\frac{1}{2}n\theta_j$ with probability at most $2e^{-\frac{1}{2}n\theta_j^2}$.
  Finally, if $\mathcal{I}'_j$ has size at least $\frac{1}{2}n\theta_j$ then by Lemmas~\ref{lem:small price effect} and~\ref{lem:price grows} 
  the possible improvement to bang-per-buck by lowering the price to $p_j$ can be bounded as (for sufficiently large $n$ such that $s_j$ is large enough to guarantee not dividing by zero)
  \begin{align}
    \frac{B_iv_{ij}}{p_j}
    \leq
    \frac{B_iv_{ij}}{p_j' - \frac{B_i}{s_j}}
    =
    \frac{B_iv_{ij}}{p_j'(1 - \frac{B_i}{s_jp_j'})}
    =
    \frac{B_iv_{ij}}{p_j'}\frac{1}{(1 - \frac{B_i}{s_jp_j'})}
    =
    \frac{B_iv_{ij}}{p_j'} \frac{s_jp_j'}{s_jp_j' - B_i}
    \leq
    \frac{B_iv_{ij}}{p_j'} \frac{s_jp_j^{\downarrow}}{s_jp_j^{\downarrow} - B_i}
    \label{eq:ceei bpb bound}
  \end{align}

  Now consider $\mathcal{I}'_j$ for each $j$. The probability that at least one $j$ is such that $\mathcal{I}'_j$ does not have at least $\frac{1}{2}n\theta_j$ individuals with such a valuation can be upper bounded by the union bound $\sum_{j\in \mathcal{J}}2e^{-\frac{1}{2}n\theta_j^2} \leq 2me^{-\frac{1}{2}n\min_j\theta_j^2}$.
  We now bound the expected utility gain by bounding each of the two cases: there exists a $j$ such that $\mathcal{I}'_j$ does not have at least $\frac{1}{2}n\theta_j$ individuals with such a valuation, or there does not exist such a $j$.

  First the case where such a $j$ exists. Since the likelihood of such a $j$ existing is exponentially decreasing, it will be sufficient to loosely bound the utility in this case by $s\cdot v_i$, the utility to $i$ of getting all the items.
  Thus the contribution to the expected gain from this case is
  \begin{align}
    2me^{-\frac{1}{2}n\min_j\theta_j^2} s\cdot v_i
    \label{eq:expected gain some j exists}
  \end{align}

  When there is no such $j$, we know from Lemma~\ref{lem:price grows} that each price is lower bounded by $p_j \geq \frac{\theta_j\min_iB_i}{2\sum_{j'}c_{j'}v_{j'}''}$. Thus the maximum utility that $i$ can gain in this case can be upper bounded by $u_i^{\uparrow} \defeq \max_j \frac{B_iv_{ij}2\sum_{j'}c_{j'}v_{j'}''}{\theta_j\min_{i'}B_{i'}}$, which is independent of the instance size $n$. Now the utility gain for any particular market where no such $j$ exists can be bounded as
  \begin{align}
     (\max_j \frac{B_iv_{ij}}{p_j} - \max_{j'}\frac{B_iv_{ij'}}{p_{j'}})
    \leq
     (\max_j \frac{B_iv_{ij}}{p_j'} \frac{s_jp_j^{\downarrow}}{s_jp_j^{\downarrow} - B_i} - \max_{j'}\frac{B_iv_{ij'}}{p_{j'}})
    \leq
    u_i^{\uparrow} (\max_j \frac{s_jp_j^{\downarrow}}{s_jp_j^{\downarrow} - B_i} - 1)
    \label{eq:expected gain no j exists}
  \end{align}

  Combining \eqref{eq:expected gain some j exists} and \eqref{eq:expected gain no j exists} the total expected gain can be bounded as
  \begin{align*}
    (1-2me^{-\frac{1}{2}n\min_j\theta_j^2}) u_i^{\uparrow} (\max_j \frac{s_jp_j^{\downarrow}}{s_jp_j^{\downarrow} - B_i} - 1)
    + 2me^{-\frac{1}{2}n\min_j\theta_j^2} s\cdot v_i
    &\leq
    O(u_i^{\uparrow} (\max_j \frac{s_jp_j^{\downarrow}}{s_jp_j^{\downarrow} - B_i} - 1)) \\
    &\leq
    O(\frac{1}{n})
  \end{align*}
\end{proof}
\subsection{Proof that EqEEI Has No Disparate Impact in Allocations}
\begin{proof}
  Consider the distributions $f_x^0, f_x^1$. For any allocation vector $x_i \in X_i$, let $\hat V_{x_i}$ be the set of valuation vectors such that agents with a valuation vector in $\hat V_{x_i}$ is assigned $x_i$. Note that for a specific valuation, only one allocation is possible, since we pool the items among all individuals with the same valuation vector and redivide their items uniformly among the individuals with that valuation vector. Now we use the fact that $\hat f_{\hat V}^1(v_i)=\hat f_{\hat V}^0(v_i)$ for all $v_i$ to get
  \[
    f_x^0(x_i) = \sum_{v_i \in \hat V_{x_i}} \hat f_{\hat V}^0(v_i)
    = \sum_{v_i \in \hat V_{x_i}} \hat f_{\hat V}^1(v_i) = f_x^1(x_i).
  \]
\end{proof}

\subsection{Proof that CEEqI is SP-L}

\begin{proof}
  Let $i$ be some individual in group $1$, and let $\delta > 0$. Let $x,p$ be the allocation and prices when $i$ reports truthfully, and let $x',p'$ be the allocation and prices when $i$ reports some scaled value $\alpha v_i$ for $\alpha \in [0,1)$. Let $B_1,B_1'$ be the corresponding budgets for group $1$. We know that $U(x)_1 = U(x)_0$, and thus when $i$ reports a valuation scaled to near zero, the CEEqI solution must increase the average utility of group $1$ by $u_i(x)$. We show that a constant fraction of the $n_1$ other individuals in group $1$ capture a linear amount of utility for any increase in utility to $i$, and thus as the market gets large the increase in utility for $i$ cannot be large since these other individuals quickly gain the requisite $u_i(x)$ utility that achieves the CEEqI condition.

  Let $\mathcal{I}_\delta$ be the set of individuals such that $|v_{ij} - v_{i'j}| \leq \delta$ for all $i' \in \mathcal{I}_\delta$. For any $i'\in \mathcal{I}_\delta$  we can lower bound the increase in utility that $i'$ gets based on the utility $\Delta u_i = u_i(x') - u_i(x)$ that $i$ gains by misreporting. Let $j,j'$ be the best bang-per-buck items for $i$ under $p$ and $p'$ respectively.
  \[
    \frac{B_1v_{i'j}}{p_j}
    \leq \frac{B_1v_{ij}}{p_j} + \frac{B_1}{p_j}\delta
    \leq \frac{B_1v_{ij'}}{p_j'} + \frac{B_1}{p_j}\delta - \Delta u_i
    \leq \frac{B_1v_{i'j'}}{p_j'} + \delta(\frac{B_1}{p_j} + \frac{B_1}{p_{j'}}) - \Delta u_i
  \]
  Thus $i'$ gains at least $\Delta u_i - \delta(\frac{B_1}{p_j} + \frac{B_1}{p_{j'}})$. 

  Now let $\theta_\delta$ be the probability of sampling an individual $i'$ such that $i' \in \mathcal{I}_\delta$. The expected size of $\mathcal{I}_\delta$ is $n\theta_\delta$. Now the probability of having less than $\frac{1}{2}n\theta_\delta$ individuals in $\mathcal{I}_\delta$ is less than $2e^{-\frac{1}{2}\theta_\delta n}$. Thus the total utility gain in $\mathcal{I}_\delta$ is at least
  \begin{align}
    \label{eq:ceeqi spl small share}
    \left(\frac{1}{2}n\theta_\delta\right) \left(\Delta u_i - \delta\left(\frac{B_1}{p_j} + \frac{B_1}{p_{j'}}\right)\right)
  \end{align}
  Now we can pick $\delta$ sufficiently small such that the second parenthesis above is strictly positive (such a $\delta$ is guaranteed to exist since we can lower-bound prices via Lemma~\ref{lem:price grows}, where the choice of $\mathcal{I}$ in Lemma~\ref{lem:price grows} can be chosen independently of $\delta$). Now by misreporting their valuation as $\alpha v_i$ causes an increase in $B_1$ such that the total utility of group $1$ increases by $u_i(x)$. But by \eqref{eq:ceeqi spl small share} $i$'s share of this increase is decreasing in $n$.

  The above shows that the utility gain from reporting a scaled-down valuation tends to zero as the market gets large. Since the theorem above shows that the utility gain from reporting a different utility vector $v_i'$ combined with some scalar $\alpha$ tends to zero, as compared to reporting $\alpha v_i$. Thus the expected utility from misreporting tends to $0$ as the market gets large.

  The argument for individuals in group $0$ is completely analogous.
\end{proof}

\subsection{MovieLens 1M Training}
We construct our market as follows. We train a model to minimize the loss $$\min_{\alpha, \beta, U, M} \sum_{dataset} (r_{ij} - \alpha_i + \beta_j + u_i \cdot m_j)^2$$ where $u_i$ is a user specific $d$ dimensional vector, $m_j$ is a movie specific $d$ dimensional vector and $\alpha_i$, $\beta_j$ are user/movie specific biases (i.e. average ratings). 

We train a recommender system model on the data using standard techniques in PyTorch using a validation set to choose hyperparameters $d$ and regularization parameter. Our best performing model has an out-of-sample MSE of $\sim .86$ (comparable to other matrix-based collaborative filtering techniques \cite{sedhain2015autorec}), dimension $d=10$ and weight decay ($l_2$ regularization) parameter of $1e-5.$ To enforce the positivity constraint we project the vectors such that each predicted $\hat{r}_{ij} > .01$. 

%%% Local Variables:
%%% mode: latex
%%% TeX-master: "main"
%%% End:

\end{document}